  \renewcommand*{\AC@hyperlink}[2]{%
    \begingroup
      \hypersetup{hidelinks}%
      \hyperlink{#1}{#2}%
    \endgroup
  }%
\DeclareFontFamily{U}{mathb}{\hyphenchar\font45}
\DeclareFontShape{U}{mathb}{m}{n}{
      <5> <6> <7> <8> <9> <10> gen * mathb
      <10.95> mathb10 <12> <14.4> <17.28> <20.74> <24.88> mathb12
      }{}
\DeclareSymbolFont{mathb}{U}{mathb}{m}{n}
\DeclareMathSymbol{\Asterisk}{3}{mathb}{"06}
\DeclareMathAlphabet{\mathscrbf}{OMS}{mdugm}{b}{n} 
\renewcommand{\i}{\ensuremath\mathrm{i}} 
\DeclareMathOperator{\rank}{rank} 
\newcommand{\argdot}{{\,\cdot\,}} 
\DeclarePairedDelimiterX{\abs}[1]{\lvert}{\rvert}{%
  \ifblank{#1}{\,\cdot\,}{#1}
}   
\DeclarePairedDelimiterX\norm[1]\lVert\rVert{%
  \ifblank{#1}{\,\cdot\,}{#1}
}   
\providecommand\given{}
\newcommand\SetSymbol[1][]{%
  \nonscript\:#1\vert
  \allowbreak
  \nonscript\:
  \mathopen{}}
\renewcommand\SetSymbol[1][]{\,\colon} 
\DeclarePairedDelimiterX\Set[1]\{\}{%
  \renewcommand\given{\SetSymbol[\delimsize]}
  #1
}
\DeclarePairedDelimiterX\innerp[2]{\langle}{\rangle}{%
  \ifblank{#1}{\,\cdot\,}{#1} , \ifblank{#2}{\,\cdot\,}{#2}%
}
\DeclarePairedDelimiterX\scprod[2]{\langle}{\rangle}{%
  \ifblank{#1}{\,\cdot\,}{#1} , \ifblank{#2}{\,\cdot\,}{#2}%
}
\DeclarePairedDelimiterX\av[1]{\langle}{\rangle}{%
  \ifblank{#1}{\,\cdot\,}{#1}%
}
\DeclarePairedDelimiterX\braket[2]{\langle}{\rangle}%
  {#1\kern0.15ex\delimsize\vert\kern0.15ex\mathopen{}#2}
\DeclarePairedDelimiterX\ketbra[2]{\vert}{\vert}%
  {#1\kern0.15ex\delimsize\rangle\delimsize\langle\kern0.15ex\mathopen{}#2}
\DeclarePairedDelimiterX\sandwich[3]{\langle}{\rangle}%
  {#1\,\delimsize\vert\kern0.15ex\mathopen{}#2\kern0.15ex\delimsize\vert\kern0.15ex\mathopen{}#3}
\DeclareMathOperator*{\argmax}{arg\max}
\DeclareMathOperator{\conv}{\ast}
\DeclareMathOperator{\Conv}{\Asterisk}
\DeclareMathOperator{\supp}{supp} 
\newcommand{\pauliphases}{\mathcal{P}}
\newcommand{\pauli}{\mathsf{P}} 
\newcommand{\paulids}{\mathsf G^{n,m}} 
\newcommand{\Ftwo}{\mathbb{F}_2}
\newcommand{\stabs}{\mathscr{S}} 
\newcommand{\gauge}{\mathscr{G}} 
\newcommand{\logops}{\mathscr{L}} 
\newcommand{\meas}{\mathscr{M}} 
\newcommand{\undet}{\mathscr{U}} 
\newcommand{\fourier}[1]{\mathscr{F}[#1]} 
\newcommand{\fourierinv}[1]{\mathscr{F}^{-1}[#1]}
\newcommand{\kron}[1]{\left[#1\right]}
\newtheorem{theorem}{Theorem} 
\newtheorem{lemma}[theorem]{Lemma}
\newtheorem{definition}[theorem]{Definition} 
\newtheorem{corollary}[theorem]{Corollary} 
\newcommand{\hhu}{Institute for Theoretical Physics,
      Heinrich-Heine-University D{\"u}sseldorf,
      Germany
}
\newcommand{\tuhh}{%
    Institute for Quantum-Inspired and Quantum Optimization,
    Hamburg University of Technology, Germany
}
\begin{document}
\title{
Learning 
logical Pauli noise in quantum error correction
}

\author{Thomas Wagner}
\affiliation{\hhu}
\author{Hermann Kampermann}
\affiliation{\hhu}
\author{Dagmar Bru\ss}
\affiliation{\hhu}
\author{Martin Kliesch}
\email[]{martin.kliesch@tuhh.de}
\affiliation{\hhu}
\affiliation{\tuhh}

\hypersetup
 {pdftitle = {Learning Logical Pauli Noise in Quantum Error Correction},
       pdfauthor = {Thomas Wagner, Hermann Kampermann, Dagmar Bruss, Martin Kliesch},
       pdfsubject = {Quantum computing},
       pdfkeywords = {
                      quantum, error, correction, QEC, code, codes, QECC,
                      data, syndrome,
                      stabilizer, decoding,
                      pure, distance,
                      estimation, noise, from, syndrome, statistics,
                      probabilistic modeling, estimation, inverse problem,
                      measurements, identifiability,
                  Pauli, group, phase space, symplectic,
                      logical error, method of moments, moments,
                      binomial, system, equation,
                      Boolean, Fourier, transform,
                      orthogonal array,
                      Möbius, inclusion, exclusion, principle,
                      distributed source coding,
                      cleaning lemma
                      }
  }

\begin{abstract}
The characterization of quantum devices is crucial for their practical implementation but can be costly in experimental effort and classical postprocessing. 
Therefore, it is desirable to measure only the information that is relevant for specific applications and develop protocols that require little additional effort.
In this work, we focus on the characterization of quantum computers in the context of stabilizer quantum error correction.
For arbitrary stabilizer codes, subsystem codes, and data syndrome codes, we prove that the logical error channel induced by Pauli noise can be estimated from syndrome data under minimal conditions. 
More precisely, for any such code, we show that the estimation is possible as long as the code can correct the noise.
\end{abstract}

\maketitle

For any quantum device, it is desirable to characterize both its individual components as well as their interplay
\cite{kliesch2021_certificationreview, Eisert2020QuantumCertificationAnd}.
For the characterization of single quantum gates, protocols such as
quantum process tomography (e.g.\ Ref.~\cite{KliKueEis19})
or
gate set tomography \cite{BluGamNie17,Nielsen20pyGSTi,Brieger21CompressiveGateSet} can be used.
To characterize the interplay of multiple components, randomized benchmarking \cite{KniLeiRei08,Magesan2012}, as well as
crosstalk detection \cite{sarovar2020_detectingcrosstalk} and
estimation \cite{McKay2020CorrelatedRB,harper2020_efficientlearningofquantumnoise} protocols are available.
The general goals are
\begin{compactenum}[(i)]
\item to build trust in the correct functioning of the device,
\item to be able to reduce the errors on the hardware level and improve the software calibration, and
\item to compare different devices and platforms in a fair way.
\end{compactenum}
However, such characterization protocols can be quite resource-intensive, requiring many experimental runs of the device and  such protocols' output can be challenging to interpret.
Therefore, it has become a pressing issue to obtain easy-to-use information, such as Pauli error rates directly \cite{Flammia2019EfficientEstimation,flammia2021_paulilearningpopulationrecovery, Harper2020FastEstimationOf,harper2020_efficientlearningofquantumnoise,HarFlaWal19},
ideally using only data that is easy to obtain.
The estimation of Pauli noise is also practically interesting because randomized compiling can be used to project the actual noise onto Pauli noise \cite{Wallman16NoiseTailoringFor, Ware21ExperimentalPauli-frame}. This has explicitly been discussed in the context of \ac{QEC}  \cite{Iyer22EfficientDiagnosticsFor}.

In the context of \ac{QEC},
it has been suggested to reduce the experimental effort of characterization by extracting information from the syndrome data, which is usually collected during error correction anyway
\cite{fujiwara_instantaneouschannelestimationcss,
fowler_scalableextractionoferrormodelsfromqec,
Huo17LearningTime-dependent,
wootton_qiskitbenchmarking,
florjanczykbrun_insituadaptiveencoding,
obrien_adaptiveweightestimator,
combes_insitucharofdevice, Wagner2020OptimalNoiseEstimation, Wagner21PauliChannels}.
Such an approach is complementary to the standard benchmarking before operation.
It has the additional advantage of benchmarking all components in the context of the targeted application and making it easier to detect crosstalk.
Indeed, syndrome data has been used to calibrate decoders and observe signatures of crosstalk in experiments on the [4,1,2] code \cite{IBM2022_calibrateddecodersexperimental}, the repetition code \cite{google2021_repcode-short_auth_list} and the surface code \cite{google2023_dist5surfacecode_shortauthor}.
Finally, estimation based on syndrome data is the only method of characterization that is not invasive, in the sense that the encoded logical information is not perturbed by the measurements.
Thus, it is at least in principle suited for estimation of noise in a time-dependent environment \cite{obrien_adaptiveweightestimator,huo_2017}.

For general stabilizer codes, however, the theoretical foundation of such schemes is currently lacking.
Since the syndrome measurements must preserve the encoded state, it is not \emph{a priori} clear that they should even contain sufficient information about the noise to be useful for \ac{QEC}.
For example, as shown in our previous work \cite{Wagner21PauliChannels}, a complete Pauli channel can only be estimated from syndrome data if it is known that the Pauli errors are not correlated across too many qubits, quantified by the \emph{pure} distance.
This limit on correlations can be quite strict, as can be seen for the toric code, which has a pure distance of $d = 4$ independent of system size.
Hence, this assumption is violated by natural noise processes such as error propagation in the stabilizer measurements, which can introduce data errors on all participating qubits.

In this work, we show that the estimation of error rates is possible under much more practical conditions if one focuses only on information that is actually relevant for \ac{QEC}.
It is not necessary to distinguish between logically equivalent errors.
Thus, it suffices to estimate the logical noise channel instead of the physical one.
At least for phenomenological Pauli noise models, we prove that the situation is as good as one could reasonably hope:
as long as the noise affecting a stabilizer code can be corrected by it, one can also estimate the logical noise channel from the corresponding syndrome measurements.

The proof is based on our general framework \cite{Wagner21PauliChannels}, but extended to consider the logical instead of the physical channel.
Similar to randomized benchmarking, we consider the problem in Fourier space \cite{Flammia2019EfficientEstimation}.
This representation corresponds to a description of the logical channel in terms of moments instead of probabilities.
Exploiting a weak assumption of limited correlations, we can further simplify the description by switching from regular moments to a set of canonical moments.
Both the logical channel and the syndrome measurements can be represented by linear equations on a small set of canonical moments.
By considering the ranks of these two linear systems, we then show that
the syndrome measurements determine the logical channel.
Computing the ranks boils down to counting a specific subset of logical operators of the code, which we solve by employing a recent generalization of the
cleaning lemma \cite{kalachev2022_cleaninglemma} of \ac{QEC}.

\section{Stabilizer codes}
Let us quickly recap the most important features of stabilizer codes for our purposes.
A more thorough introduction can e.g.\ be found in the books \cite{NieChu10,lidar_brun_2013}.
A stabilizer code is described by a commuting subgroup $\stabs \subseteq \pauliphases^n$ of the $n$-qubit Pauli group, called \emph{stabilizer group}.
It must fulfill $-I \not \in \stabs$.
The \emph{codespace} is then the simultaneous $+1$ eigenspace of all the stabilizers.
As is usual in the context of \ac{QEC}, we disregard phases and view $\stabs$ as a subgroup of the \emph{effective Pauli group} $\pauli^n \coloneqq \pauliphases^n / \{\pm 1, \pm i\}$. This is an Abelian group, but the relevant commutation relations of $\pauliphases^n$ can be encoded in the \emph{bicharacter} $\scprod{\argdot}{\argdot}$ on $\pauli^n$, given by
\begin{align}
\scprod{a}{e} \coloneqq \begin{cases} +1, & a \text{ and } e \text{ commute in } \pauliphases^n \\
								-1, & a \text{ and } e \text{ anti-commute in } \pauliphases^n	\end{cases} \, .
\end{align}
By definition, all elements of $\stabs$ act trivially on the encoded states.
We can also consider Pauli operators that map the codespace to itself, but do not necessarily act as the identity.
These form the set $\logops \subseteq \pauli^n$ of \emph{logical operators}.
It can be shown that $\logops$ is exactly the set of Pauli operators that commute with all stabilizers.
Formally, we can express this as the \emph{annihilator} $\stabs^\perp$ of $\stabs$ in $\pauli^n$ under the above bicharacter, i.e.
\begin{equation}
\logops \coloneqq \stabs^\perp \coloneqq \{ l \in \pauli^n \, : \, \scprod{s}{l} = +1 \, \forall s \in \stabs \} \, .
\end{equation}
In particular, we have $\stabs \subseteq \logops$ since each stabilizer is itself a logical operator that implements the logical identity.
If a logical operator (other than a stabilizer) occurs as an error, this cannot be detected and the encoded state is corrupted.
The distance $d$ of a code is defined as the minimal weight of an element of $\logops \setminus \stabs$.
This measures the error correction capabilities of the code.
We call a set of qubits $R \subseteq \{1,\dots,n\}$ \emph{correctable} if it only supports trivial logical operators.
This definition is inspired by the discussions in Refs.~\cite{bravyi2009_distanceindimensiond,Burton2020_transversalgateshypergraphcodes}.
In particular, if $|R| < d$, then $R$ is correctable.
This is however generally not an equivalence, and there can be many correctable regions of size much larger than $d$.
For example, any rectangular region of side length at most $d-1$ on the $d \times d$ toric code is correctable, but contains more than $d$ qubits.

We will focus on phenomenological Pauli noise models, and thus do not take into account the details of error propagation inside the measurement circuits.
We can then consider rounds of error correction, and between two rounds a new Pauli error occurs.
These Pauli errors are described by a quantum channel $P$, which is given by a probability distribution
\begin{equation}
P: \pauli^n \mapsto [0,1] \, .
\end{equation}
Later we will also impose some locality assumptions on this channel.

Standard error correction using a stabilizer code proceeds as follows:
in each round, a set of generators $g_1,\cdots,g_m \in \stabs$ is measured.
Ideally, the state lies in the codespace and thus all measurements return $+1$.
However, if an error $e \in \pauli^n$ occurred beforehand, the outcome of the measurement of $g_i$ is $\scprod{g_i}{e} = \pm 1$.
The collection of measurement outcomes of all generators is called the \emph{syndrome} $S(e)$ of an error $e$.
Based on the syndrome, a decoder tries to guess the error that occurred, and applies it as a correction $r$.
Since errors that only differ by stabilizers are logically equivalent, the ideal decoding strategy for a given syndrome $S$ is to return a maximum likelihood estimate of the form
\begin{equation}
r = \argmax_{e \in \pauli^n \, : \, S(e) = S} \sum_{s \in \stabs} P(es) \, .
\end{equation}
Thus, full knowledge of the physical channel $P$ is not necessary for optimal decoding.
Instead, it is sufficient to know the \emph{logical channel} $P_L$, which we define by averaging $P$ over cosets of $\stabs$
\begin{equation}
P_L: \pauli^n \rightarrow [0,1], \quad
P_L(e) = \frac{1}{|\stabs|}\sum_{s \in \stabs} P(es) \, .
\label{eq:logical_channel}
\end{equation}
We note that often the term \emph{logical channel} is defined to be a map acting only on the logical information, conditioned on each syndrome (e.g.\ Refs.~\cite{Beale_2021,Rahn2002,Chamberland2017}).
That is, if the code encodes $k$ qubits there is one distribution on $\mathsf{P}^k$ for each syndrome.
However, this definition depends on the choice of correction for each syndrome since the state needs to be mapped back to the codespace.
Here, we define the logical channel in a decoder-independent way.
In particular, we only consider ``predecoding'' noise, i.e., the noise before any potential decoding operation.
In other words, our definition, Eq.\ \eqref{eq:logical_channel}, can be viewed as a lift of the logical channels for each syndrome from $\mathsf{P}^k$ to $\mathsf{P}^n$, resulting in a distribution $P_L$, which is constant on cosets of $\stabs$.
In particular, $P_L$ contains all the same information as the logical channels in the usual sense.
In standard error correction, it is assumed that the logical channel is known, and the task is to find a good decoding for each syndrome.
Here, however, we will consider a ``reverse'' problem: given (an estimate of) the syndrome statistics, can we (uniquely) obtain the logical channel $P_L$?
Perhaps surprisingly, we will show that this is possible as long as the noise affecting the code is correctable in a certain sense.

\section{Moments}
\label{sec:Moments}
To tackle this estimation problem, we will first switch our description of $P$ via a Fourier transform.
The Fourier transform $\fourier{f}$ of a function $f: \pauli^n \rightarrow \mathbb{R}$ is defined as
\begin{equation}
\fourier{f} : \pauli^n \rightarrow \mathbb{R}, \quad
\fourier{f}(a) = \sum_{e \in \pauli^n} \scprod{a}{e} f(e) \, .
\end{equation}
This is also sometimes called Walsh-Hadmard transform 
\cite{Flammia2019EfficientEstimation}.
From the definition, we see that for any stabilizer $s \in \stabs$, $\fourier{P}(s)$ is exactly the expectation of $s$ in repeated rounds of error correction. It can thus be computed from the measured syndrome statistics.
In analogy, we denote $E = \fourier{P}$ and call this the set of moments, i.e. there is one moment $E(a)$ for each $a \in \pauli$.
One should however keep in mind that only the moments corresponding to stabilizers can be measured without destroying the encoded information.
Since the Fourier transform is an invertible transformation, with inverse given by
\begin{equation}
\fourierinv{f}(e) = \frac{1}{|\pauli^n|}\sum_{a \in \pauli^n} \scprod{a}{e} f(a)  \, ,
\end{equation}
knowing all moments $E$ is equivalent to knowing the complete error distribution $P$.

Since we are only interested in learning the logical channel, only a subset of all moments needs to be estimated. These are exactly the moments corresponding to logical operators.
To see why this is the case, let us first introduce the convolution on $\pauli^n$. For two functions $f,g: \pauli^n \rightarrow \mathbb{R}$, their convolution is defined by
\begin{equation}
(f \conv g)(e) = \sum_{e' \in \pauli} f(e')g(ee') \, .
\end{equation}
As expected, it can be shown that convolutions transform into products under Fourier transform:
\begin{equation}
\fourier{f \conv g} = \fourier{f} \cdot \fourier{g} \, .
\end{equation}
The logical channel $P_L$, defined in Eq.\ \cref{eq:logical_channel}, can be written as the convolution of the physical channel $P$ with the uniform probability distribution over stabilizers $U_\stabs$,
\begin{equation}
P_L = P \conv U_\stabs \, .
\end{equation}
It is well known that $\fourier{U_\stabs} = \Phi_{\stabs^\perp} = \Phi_\logops$, where $\Phi_{\logops}$ is the indicator function of $\logops$  \cite{mao2005_convolutionalfactorgraphs}.
Therefore the logical channel can be characterized in Fourier space by the moments
\begin{equation}
E_\logops \coloneqq E \cdot \Phi_{\logops} \, .
\label{eq:moments-logical}
\end{equation}
This is a special instance of the averaging or subsampling duality explained in \cite{mao2005_convolutionalfactorgraphs}.
To summarize the above discussion, the logical channel is fully characterized by the moments corresponding to logical operators.
The estimation problem can then be phrased as follows:
given the moments $E_\stabs$ of all stabilizers, compute the moments $E_\logops$ of all logical operators.

\section{Correctable noise}
The above estimation problem cannot be solved for arbitrary channels $P$, since in general the moments are independent of each other.
Here, our assumption of limited correlations becomes important.

To formalize this assumption, consider a set of \emph{supports} $\Gamma \subseteq 2^{\{1,\dots,n\}}$, where $2^{\{1,\dots,n\}}$ denotes the powerset of ${\{1,\dots,n\}}$.
These supports are allowed to overlap with each other.
We assume that on each support $\gamma \in \Gamma$, there acts an independent Pauli channel $P_\gamma: \pauli^\gamma \rightarrow [0,1]$.
Thus, the noise is correlated across each support, but not between different supports.
If the supports are small, any high weight error must arise as a combination of independent lower weight errors.
This is the scenario where error correction has a chance to improve the fidelity.
On the other hand, if the supports are too large, error correction usually fails.
Thus, we assume that the noise is \emph{correctable} in the following sense.
\begin{definition}[Correctable noise]
\label{def:CorrectableNoise-MainText}
A Pauli channel $P$ described by a set of supports $\Gamma \subseteq 2^{\{1,\dots,n\}}$ is called \emph{correctable} if the following two conditions are fulfilled:
\begin{enumerate}[label=(\roman*)]
\item For all $\gamma_1,\gamma_2 \in \Gamma$, the union $\gamma_1 \cup \gamma_2$ is a correctable region.
\item \label{item:noise_level} $P_\gamma(I) > \frac{1}{2}$ for all $\gamma \in \Gamma$.
\end{enumerate}
\end{definition}

We see from the definition of distance that the first condition is fulfilled in particular if $|\gamma| \leq \lfloor \frac{d-1}{2} \rfloor$ for all $\gamma \in \Gamma$.
The second condition simply states that the error rates should not be too large.
It guarantees that all moments are positive, i.e. $E(a) > 0$ for all $a \in \pauli^n$.
We would like to emphasize that our definition of correctable noise is quite a weak one: actual \ac{QEC} requires the noise level to be below some code-dependent threshold, which is always  lower than the one imposed by our condition \ref{item:noise_level}.
\Cref{def:CorrectableNoise-MainText} is also distinct from the Knill-Laflamme condition \cite{Knill00TheoryOfQuantum}, \cite[Theorem~10.1]{NieChu10}, which is usually applied to a subnormalized part of the full error channel.

Since the multiplication of independent Pauli random variables corresponds to a convolution of their probability distributions, the full channel $P$ can be written as a convolution of the independent local channels,
\begin{equation}
P = \Conv_{\gamma \in \Gamma} P_{\gamma}.
\end{equation}
In this notation, we set $P_\gamma(e) = 0$ if $\supp(e) \not \subseteq \gamma$.
In order to better capture this structure in Fourier space, we can introduce a set of \emph{canonical moments} $F$ (which we called ``transformed moments'' before \cite{Wagner21PauliChannels}).
For $a,b \in \pauli^n$, let us write $b \leq a$ if $b$ is a substring of $a$.
Then we define the canonical moments as
\begin{equation}
\label{eq:CanonicalMomentsDef-MainText}
F: \pauli^n \rightarrow \mathbb R \, , \quad
F(a) = \prod_{b \in \pauli^n : b \leq a} E(b)^{\mu(b,a)} \, ,
\end{equation}
where $\mu$ is the Möbius function defined by
\begin{equation}
\mu(b,a) = \begin{cases} (-1)^{|a| - |b|}, & b \leq a \\
0, & \text{otherwise} \end{cases} \, ,
\end{equation}
which is well known in combinatorics \cite{Aigner2007_ACourseInEnumeration}.
The Möbius function is defined in such a way that in Eq.\ \cref{eq:CanonicalMomentsDef-MainText}, we divide out that part of the moment $E(a)$ that is already described by substrings $b \leq a$, without ``double counting'' any substring.
Essentially, while the regular moments $E$ also capture correlations across all subsets of their support, the canonical moments only capture correlations across their full support.
The advantage is that a small set of canonical moments is sufficient to fully describe the channel.
In particular, the following two facts about canonical moments are shown in the
\cref{lem:CanonicalMomentProperties} in the appendix.
First of all, we only need to consider the canonical moments that lie completely inside a channel support $\gamma$, since
\begin{equation}
F(a) = 1 \text{ if } \supp(a) \not \subseteq \gamma  \quad \forall \gamma \in \Gamma \, .
\end{equation}
The set of such canonical moments is  $F_{\Gamma'} = [F(a)]_{a \in \Gamma'}$, where
\begin{equation}\label{eq:Gamma-prime}
\Gamma' = \{a \in \pauli \, : \, \exists \gamma \in \Gamma \text{ such that } \supp(a) \subseteq \gamma\} \, .
\end{equation}
Furthermore, the regular moments $E$ are obtained from the canonical moments $F$ by
\begin{equation}
E(a) = \prod_{b \leq a} F(b) \,  .
\end{equation}

\section{Identifiability}
Since the moments $E_\stabs$ can be obtained from the syndrome measurements, and the channel is fully described by the canonical moments $F_{\Gamma'}$, estimation of the physical channel boils down to solving the system of equations
\begin{equation}
\label{eq:EqSystem1-MainText}
E(s) = \prod_{a \in \Gamma', a \subseteq s} F(a) \, .
\end{equation}
For correctable noise, all moments are positive.
Then, Eq.\ \cref{eq:EqSystem1-MainText} can be transformed into a system of linear equations by taking logarithms.
This system can be expressed by the coefficient matrix $D_\stabs$, whose rows are labeled by stabilizers and whose columns are labeled by elements of $\Gamma'$, with entries
\begin{equation}
D_\stabs[s,a] = \begin{cases} 1, & a \subseteq s \\ 0, & \text{ otherwise } \end{cases} \, .
\end{equation}
As we have proven before \cite{Wagner21PauliChannels}, a unique solution exists if the range of correlations of the error channel $P$ is smaller than the pure distance of the code.
Correctable noise generally does not fulfill this strict condition.
Thus, the system is underdetermined and the physical channel $P$ cannot be estimated just from the syndrome measurements.

We are, however, only interested in estimating the logical channel, Eq.\ \eqref{eq:logical_channel}, which contains less information.
As derived above, Eq.\ \cref{sec:Moments},
it suffices to consider the moments $E_\logops$.
The question is now whether the moments $E_\logops$ can be computed from the measured moments $E_\stabs$, i.e., whether the corresponding equations of the form Eq.\ \cref{eq:EqSystem1-MainText} are linearly dependent after taking logarithms.
In other words, the logical channel can be uniquely estimated from the syndrome measurements if
\begin{equation}
\rank(D_\stabs) = \rank(D_\logops) \, .
\end{equation}
This condition is equivalent to $\rank(D_\stabs^TD_\stabs) = \rank(D_\logops^TD_\logops)$.
We will prove this by showing the even stronger statement
\begin{equation}
\label{eq:CoefficientProportional}
D_\stabs^TD_\stabs \propto D_\logops^TD_\logops \, .
\end{equation}

First, note that $D_\stabs^TD_\stabs$ can be easily computed from its definition,
\begin{equation}
D_\stabs^TD_\stabs[a,b] = |\{s \in \stabs \, : \, a \leq s \text{ and } b \leq s \}| \, .
\end{equation}
The analogous statement holds for $D_\logops$.
By rewriting Eq.\ \cref{eq:CoefficientProportional} in terms of individual entries, we see that the logical channel can be uniquely estimated from the syndrome statistics if for all $a,b \in \Gamma'$,
\begin{equation}
\label{eq:CoefficientProportional-explicit}
|\{s \in \stabs \, : \, a,b \leq s\}|
= c\,  |\{l \in \stabs^\perp \, : \, a,b \leq l \}| \, ,
\end{equation}
where $c$ is a constant independent of $a,b$.
This is a counting problem that depends only on global properties of the stabilizers and logical operators, but not on their specific form.
To solve this counting problem, we will employ the well-known cleaning lemma, which was first stated by \citet{bravyi2009_distanceindimensiond}.
Informally, this lemma states that any correctable region can be cleaned from logical operators.
\begin{lemma}[Cleaning Lemma]
Let $R$ be a correctable region. Then any coset $[l] \in \logops / \stabs$ of logical operators  has a representative $l \in \logops$ that has no support on $R$, i.e.\ $\supp(l) \cap R = \emptyset$.
\end{lemma}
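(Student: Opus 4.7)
The plan is to reduce the lemma to the claim that $\pi_R(\stabs) = \pi_R(\logops)$, where $\pi_R: \pauli^n \to \pauli^R$ denotes restriction of effective Paulis to $R$. Restriction is a well-defined group homomorphism on $\pauli^n$ precisely because phases have already been quotiented out. Granted this equality, the lemma follows at once: for any $l \in \logops$ there is $s \in \stabs$ with $\pi_R(s) = \pi_R(l)$, so $l \cdot s$ lies in the same coset $[l]$ as $l$ and satisfies $\supp(l \cdot s) \cap R = \emptyset$.

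The inclusion $\pi_R(\stabs) \subseteq \pi_R(\logops)$ is immediate from $\stabs \subseteq \logops$. For the reverse, I would work inside the space $\pauli^R$, which inherits a nondegenerate symplectic bicharacter from $\pauli^n$. Writing $\stabs_R \coloneqq \stabs \cap \pauli^R$, the assumption that $R$ is correctable is used in exactly one place: it says that every logical operator fully supported on $R$ is already a stabilizer, i.e.\ $\logops \cap \pauli^R = \stabs_R$.

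The central observation is that for any $a \in \pauli^R$ and any $b \in \pauli^n$ one has $\scprod{a}{b} = \scprod{a}{\pi_R(b)}$, since the bicharacter factorizes over $R$ and its complement and $a$ is trivial off $R$. Letting $b$ range over $\stabs$ shows that the symplectic complement of $\pi_R(\stabs)$ inside $\pauli^R$ equals $\logops \cap \pauli^R = \stabs_R$. Letting $b$ range over $\logops$ shows the complement of $\pi_R(\logops)$ equals $\logops^\perp \cap \pauli^R = \stabs \cap \pauli^R = \stabs_R$, where I use $\logops^\perp = \stabs^{\perp\perp} = \stabs$ by nondegeneracy of the bicharacter on $\pauli^n$. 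Taking complements once more inside the nondegenerate space $\pauli^R$ then yields $\pi_R(\stabs) = \stabs_R^\perp = \pi_R(\logops)$.

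The main thing to verify carefully is that the bicharacter indeed restricts to a nondegenerate symplectic form on $\pauli^R$, so that the double-complement equals the identity; this is immediate from the standard symplectic model $\pauli^R \cong \Ftwo^{2|R|}$. The rest of the argument is pure duality and produces no explicit construction of the cleaning stabilizer, which I expect to be the only conceptually delicate point: the lemma is a pigeonhole-style existence statement, and correctability is leveraged only through the identification $\logops \cap \pauli^R = \stabs_R$.
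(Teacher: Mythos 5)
Your proof is correct, but it takes a genuinely different route from the paper's. The paper does not prove the lemma directly: in the appendix it derives a generalized version (for a measurement group $\meas$ and gauge group $\gauge$ with $\gauge\subseteq\meas^\perp$) by invoking the abstract, lattice-theoretic cleaning lemma of Kalachev, which yields the cardinality identity $|(\undet\cap A_R)/(\gauge\cap A_R)|\cdot|(\logops\cap A_{R^c})/(\meas\cap A_{R^c})|=|\logops/\meas|$; correctability kills the first factor, and a separate embedding lemma upgrades the resulting equality of cardinalities to surjectivity of the map $(\logops\cap A_{R^c})/(\meas\cap A_{R^c})\to\logops/\meas$. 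You instead argue entirely inside the restricted group $\pauli^R$: the identity $\scprod{a}{b}=\scprod{a}{\pi_R(b)}$ for $a\in\pauli^R$ shows that the annihilators (in $\pauli^R$) of $\pi_R(\stabs)$ and $\pi_R(\logops)$ are $\logops\cap\pauli^R$ and $\stabs\cap\pauli^R$ respectively, correctability identifies both with $\stabs_R$, and the double-annihilator property of the nondegenerate product bicharacter gives $\pi_R(\stabs)=\pi_R(\logops)$, from which the representative $ls$ with $\pi_R(s)=\pi_R(l)$ is immediate (all elements of $\pauli^n$ have order two). Your argument is more elementary and self-contained, using only the annihilator machinery the paper already sets up, and it in fact generalizes verbatim to the paper's four-group setting (replace $\stabs$ by $\meas$, use $\logops^\perp=\gauge$, and read correctability as $\undet\cap A_R=\gauge\cap A_R$); what the paper's route buys is the stronger quantitative product formula and the placement of the lemma in Kalachev's general lattice framework, at the cost of importing that result as a black box.
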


Using this lemma, we can prove Eq.\ \cref{eq:CoefficientProportional-explicit}. For all $a,b \in \Gamma'$ we have
\begin{align*}
&|\{l \in \logops \, : \, a \leq l \text{ and } b \leq l \}| \\
&= \sum_{l \in \logops} [a \leq l \text{ and } b \leq l] \\
&= \sum_{[l] \in (\logops / \stabs)} \sum_{s \in \stabs} [a \leq ls \text{ and } b \leq ls] \\
&= \sum_{[l] \in (\logops / \stabs)} \sum_{s \in \stabs} [a \leq s \text{ and } b \leq s] \\
&= |\logops / \stabs| \cdot |\{s \in \stabs \, : \, a \leq s \text{ and } b \leq s \}| \,.
\end{align*}
In the second equality, we split the total sum into smaller sums over logically equivalent subsets of logical operators. Then, the third equality follows from the cleaning lemma:
since $a$ and $b$ correspond to canonical moments, they must be fully contained in some supports $\gamma_a, \gamma_b \in \Gamma$.
For correctable noise, $\gamma_a \cup \gamma_b$ is a correctable region.
Thus, if the union of the supports of $a$ and $b$ is fully contained in $\gamma_a \cup \gamma_b$, it must also be a correctable region.
By the cleaning lemma, we can  choose the representative $l$ of  the coset $[l]$ such that it acts trivially on that region.
Then, $a$ is a substring of $ls$ if and only if it is a substring of $s$, and the same holds for $b$.
This finishes the proof of Eq.\ \cref{eq:CoefficientProportional-explicit}.

We can summarize the discussion of the main text in the following theorem.
\begin{theorem}
\label{Thm:Main-MainText}
A Pauli channel $P$ can be estimated up to logical equivalence from the syndrome measurements of a stabilizer code if $P$ is correctable in the sense of \cref{def:CorrectableNoise-MainText}.
\end{theorem}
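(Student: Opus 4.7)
The plan is to reduce the identifiability question to a rank equality between two coefficient matrices, and then to settle that equality via a combinatorial identity supplied by the cleaning lemma.

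First, I would pass to the Fourier picture, where the physical channel $P$ is described by its moments $E = \fourier{P}$. Since $P_L = P \conv U_\stabs$ and $\fourier{U_\stabs} = \Phi_\logops$, the logical channel is fully characterized by $E_\logops = E \cdot \Phi_\logops$, whereas the syndrome statistics furnish exactly the moments $E_\stabs$. So the theorem reduces to the claim that $E_\logops$ is determined by $E_\stabs$.

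Second, I would exploit the locality hypothesis encoded by $\Gamma$ by re-parametrizing via the canonical moments $F_{\Gamma'}$ of \cref{eq:CanonicalMomentsDef-MainText}, under which every regular moment factorizes as $E(a) = \prod_{b \leq a} F(b)$. Because correctable noise satisfies $P(I) > 1/2$, all moments are strictly positive, so taking logarithms converts both the defining equations of $E_\stabs$ and of $E_\logops$ in terms of $F_{\Gamma'}$ into linear systems with coefficient matrices $D_\stabs$ and $D_\logops$. Identifiability of the logical channel then amounts to the rank equality $\rank(D_\stabs) = \rank(D_\logops)$, which I would deduce from the strictly stronger proportionality $D_\stabs^T D_\stabs \propto D_\logops^T D_\logops$.

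Third, expanding this matrix identity entrywise yields the combinatorial statement that, for all $a,b \in \Gamma'$,
\begin{align*}
|\{l \in \logops : a \leq l,\, b \leq l\}| = c \cdot |\{s \in \stabs : a \leq s,\, b \leq s\}|,
\end{align*}
with $c = |\logops/\stabs|$ independent of $a$ and $b$. This is the main obstacle: a priori there is no reason why the number of logical operators containing both substrings $a$ and $b$ should track the stabilizer count so cleanly, since logical cosets need not behave uniformly under substring containment. To establish it, I would partition $\logops$ into cosets of $\stabs$ and apply the cleaning lemma within each coset. Since $a,b \in \Gamma'$ are supported inside some $\gamma_a,\gamma_b \in \Gamma$, and correctability guarantees that $\gamma_a \cup \gamma_b$ is a correctable region, each coset $[l]$ admits a representative $l$ disjoint from $\gamma_a \cup \gamma_b$. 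For such $l$, the substring conditions $a \leq ls$ and $b \leq ls$ collapse to $a \leq s$ and $b \leq s$ respectively, so every coset contributes exactly $|\{s \in \stabs : a \leq s,\, b \leq s\}|$ elements, the identity follows, and with it the theorem.
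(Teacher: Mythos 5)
Your proposal is correct and follows essentially the same route as the paper's own proof: Fourier transform to moments, reduction to canonical moments $F_{\Gamma'}$ via logarithms, the rank condition $\rank(D_\stabs)=\rank(D_\logops)$ established through the proportionality $D_\stabs^T D_\stabs \propto D_\logops^T D_\logops$, and the entrywise counting identity settled by cleaning the correctable region $\gamma_a\cup\gamma_b$ coset by coset. No gaps; the constant $c=|\logops/\stabs|$ and the use of correctability of unions of supports match the paper exactly.
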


Note that while we focused on on stabilizer codes with perfect measurements for simplicity, several generalizations of this result are possible.
Measurement errors can be incorporated using the framework of quantum data syndrome codes \cite{ashikhmin_quantumdatasyndromecodes}.
Furthermore, we can also consider subsystem codes \cite{poulin2005_subsystemcodes}, which generalize stabilizer codes by allowing for some noncommuting measurements.
A full account of these generalizations, including all proofs that are omitted in the main text, is given in the appendix.
The main theorem presented there might also be interesting in contexts other than \ac{QEC}.

\section{Conclusion}
We have shown that the measurements performed during \ac{QEC} contain enough information to estimate a large class of phenomenological Pauli noise models
up to logical equivalence.
Informally, as long as the code can correct the noise, it can also be estimated from the syndrome measurements.
This result opens up new characterization possibilities since the previous results have focused only on estimating physical channels.
Our result applies to data syndrome codes and general subsystem codes, which encompass most codes in the literature.

While the focus of this work is on the fundamental identifiability of Pauli noise from syndrome data in the setting of general subsystem codes, our proofs also suggest a concrete estimation scheme. Since it is sufficient to consider as many equations as there are free parameters in the polynomial system, Eq.\ \eqref{eq:EqSystem1-MainText}, this system can in principle be solved in polynomial time in the code size. The sample complexity, however, depends on the conditioning of this system, and hence on the specifics of the code. We note, however, that for e.g.\ topological codes, estimation is expected to be possible from local subregions of the code, which implies an efficient sampling complexity \cite{Wagner21PauliChannels}.
In order to work out these ideas, a specific analysis of concrete codes is required, which is ongoing research.

We have considered only phenomenological noise models. 
For quantum communication or storage, this might be a
reasonable assumption.
In the context of fault-tolerant quantum computing, however, full circuit level noise models are more realistic than phenomenological ones, which introduces additional complications already for decoding in the first place.
A common approach to this problem is to consider approximate noise models.
For example, a minimum-weight perfect matching decoder maps the actual noise to a simplified graph with weighted edges \cite{obrien_adaptiveweightestimator,hollenberg2013_scqchigherrorrates}.
Here, our results apply directly, and the edge weights can be estimated up to logical equivalence by solving our equation system, Eq.\  \eqref{eq:EqSystem1-MainText}.

The situation is less clear if one is interested in more details than such an effective noise model provides.
In this case, one might attempt to transfer our results using a cutoff for late errors, following \citet{delfosse_beyondsingleshotfaulttolerance}, or using a mapping from circuit noise to subsystem codes, as given in Refs.~\cite{bacon2105_quantumcodesfromcircuits,pryadko2020_maximumlikelihoodcircuitnoise,chubb2019_statisticalmechanicsqec}.
We think that our work can serve as a basis for many possible research questions on characterization in the context of \ac{QEC}.

\begin{acknowledgments}
This work was funded by the Deutsche Forschungsgemeinschaft (DFG, German Research Foundation) under Germany's Excellence Strategy -- Cluster of Excellence Matter and Light for Quantum Computing (ML4Q) EXC 2004/1 -- 390534769.
The work of D. B.\ and H. K.\ is also supported by the German Federal Ministry of Education and Research (BMBF) within the
funding program ``Quantum technologies -- from basic research to market'' in the joint project QSolid (grant
no. 13N16163).
The work of M. K.\ is also supported by the DFG via the Emmy Noether program (grant no. 441423094) and by the German Federal Ministry of Education and Research (BMBF) within the funding program ``Quantum technologies -- from basic research to market'' in the joint project MIQRO (grant no. 13N15522).
\end{acknowledgments}

\appendix

\newpage
\section*{Appendix}
In this appendix, we give a self contained account of our results in a generalized setting.
The core arguments are similar to the main text.
The main difference is that we distinguish between the set of accessible measurements and the set of stabilizers (or gauge group), which describes logical equivalence.
Consequently, a more general version of the cleaning lemma is needed.
All proofs omitted in the main text are also provided in this generalized setting.
Finally, we apply the result to the classes of subsystem codes, which encompasses most quantum-error correction codes that have been constructed, and quantum data-syndrome codes, which allows for a treatment of measurement errors.

\subsection*{Notation}
We denote as $[n] \coloneqq \{1,\dots,n\}$ the set of the first $n$
positive integers.
The field with two elements is denoted $\mathbb{F}_2$.
For a statement $Q$, we denote with $\kron{Q}$ the Iverson bracket of $Q$, which takes the value $1$ if $Q$ is true and $0$ if $Q$ is false.
The powerset of a set $A$ is the set of all subsets of $A$, including the empty set, and it is denoted as $2^A$.
We denote the four Pauli matrices as $I = \bigl( \begin{smallmatrix}
1 & 0 \\ 0 & 1 \end{smallmatrix} \bigr)$, $X = \bigl( \begin{smallmatrix}
0 & 1 \\ 1 & 0 \end{smallmatrix} \bigr)$, $Y = \bigl( \begin{smallmatrix} 0 & -i \\ i & 0 \end{smallmatrix} \bigr)$ and $Z = \bigl( \begin{smallmatrix} 1 & 0 \\ 0 & -1 \end{smallmatrix} \bigr)$.
We also use $I$ for the generic identity matrix, or a generic identity element of a group.

\subsection{Mathematical background}
\label{sec:mathematicalbackground}
For the discussions of stabilizer quantum-error correction, some background on the Pauli group will be useful.
The \emph{$n$-qubit Pauli group}
$\pauliphases^n$ is the group
generated by tensor products of Pauli operators and
the imaginary unit, i.e.,
\begin{equation}
\pauliphases \coloneqq \{\alpha \bigotimes_{i=1}^n e_i \, : \, e_i = \{I,X,Z,Y\}, \alpha \in \{ \pm 1, \pm \i \}  \} \, .
\end{equation}
Since
phases can
often
be disregarded, we also work with
the \emph{effective Pauli group}
\begin{equation}
\pauli^n \coloneqq \pauliphases / \{\pm 1, \pm i\} \, .
\end{equation}
This in an Abelian group.

In \ac{QEC}, errors and stabilizer measurements are often described via an isomorphism $\mathsf P^n \rightarrow \Ftwo^{2n}$ and a scalar product on $\Ftwo^{2n}$.
We will not make use of this identification, and instead express these concepts using group characters of finite Abelian groups.
We give a short introduction here and collect the most important facts for our purposes.
A more thorough description can be found for example in
Refs.~\cite{mao2005_convolutionalfactorgraphs,fulton2013_representation}.

A \emph{group character} of a finite Abelian group $A$ is a group homomorphism
\begin{equation}
\chi: A \rightarrow S^1  \, ,
\end{equation}
where $S^1 \coloneqq \Set*{ c \in \mathbb{C} \given \abs c = 1 }$ is the unit circle.
Group characters themselves form a group $\widehat A$ under pointwise multiplication, called the dual group of $A$.
Pontryagain duality guarantees that for any locally compact and hence for any finite Abelian group, $\widehat A$ is
isomorphic to $A$.
Thus, we can express group characters by elements of the original group.
This notion can be expressed by a \emph{bicharacter}.
In the context of \ac{QEC}, bicharacters
express measurement outcomes of stabilizer measurements.

\begin{definition}
A \emph{bicharacter} of a finite Abelian group $A$ is a map
\begin{equation}
\scprod{}{}: A \times A \rightarrow S^1 \, ,
\end{equation}
such that the map $a \mapsto \scprod{a}{}$ is an isomorphism of $A$ and $\widehat A$.
\end{definition}

This is similar to a scalar product, although we often have $ \scprod{a}{a} = +1$.
Thus, we also have a notion of ``orthogonal complement'', which is the annihilator.
For a subgroup $B \subseteq A$, we define the \emph{annihilator} of $A$ as
\begin{equation}
B^\perp = \{ a \in A \, : \, \scprod{a}{b} = +1 \, \forall b \in B \} \, .
\end{equation}
We always have $(B^\perp)^\perp = B$.
Furthermore, taking the annihilator reverses the order of inclusions.
That is, for any two subgroups $B,C \subseteq A$, if $B \subseteq C$, then $C^\perp \subseteq B^\perp$.
In contrast to a scalar product, it is possible that $B \subseteq B^\perp$.

Using the bicharacter $\scprod{}{}$, we can define the \emph{Fourier transform} of a map $f: A \rightarrow \mathbb{C}$ as
\begin{equation}
\label{eq:FourierDef}
\begin{split}
&\fourier{f} : A \rightarrow \mathbb{C} \\
&\fourier{f}(a) = \sum_{b \in A} \scprod{a}{b} f(b) \, .
\end{split}
\end{equation}
This is an invertible transformation with inverse
\begin{equation}
\begin{split}
&\fourierinv{f} : A \rightarrow \mathbb{C} \\
&\fourierinv{f}(a) = \frac{1}{|A|}\sum_{b \in A} \scprod{b}{a^{-1}} f(b) \, .
\end{split}
\end{equation}
Furthermore, we will use the convolution of two maps $f,g: A \rightarrow \mathbb{C}$, which is defined as
\begin{equation}
(f \conv g)(a) = \sum_{b \in A} f(b)g(ab^{-1}) \, .
\end{equation}
As expected, convolutions
are mapped
to products
by the
Fourier transform, i.e.
\begin{equation}
\fourier{f \conv g} = \fourier{f} \cdot \fourier{g} \, .
\end{equation}

For any subgroup $B \subseteq A$, we denote with $\Phi_B$ the indicator function of $B$, i.e. $\Phi_B(a) = 1$ if $a \in B$ and $\Phi_B(a) = 0$ otherwise.
Furthermore, we denote the scaled indicator function as $U_B \coloneqq \frac{1}{|B|}\Phi_B$,
which is the uniform probability distribution on $B$.
It can be shown that the following duality holds.

\begin{lemma}[Ref.~\cite{mao2005_convolutionalfactorgraphs}]
\label{lem:IndicatorDuality}
For any subgroup $B \subseteq A$ of an Abelian group $A$:
\begin{align}
&\fourier{U_{B}} = \Phi_{B^\perp} \\
&\fourier{\Phi_B} = \abs{A}U_{B^\perp} \, .
\end{align}
\end{lemma}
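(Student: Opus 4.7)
The plan is to prove the two identities separately, deriving the second from the first together with the standard size relation $|B|\cdot |B^\perp| = |A|$ for annihilators in a finite Abelian group.

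For the first identity $\fourier{U_B} = \Phi_{B^\perp}$, I would unfold the definition of the Fourier transform to write
\begin{equation}
\fourier{U_B}(a) = \frac{1}{|B|}\sum_{b \in B} \scprod{a}{b} \, .
\end{equation}
The map $b \mapsto \scprod{a}{b}$ restricted to $B$ is a group character of $B$. If $a \in B^\perp$, this character is trivial and the sum evaluates to $|B|$, giving the value $1$. Otherwise there exists $b_0 \in B$ with $\scprod{a}{b_0} \neq 1$; reparametrizing the sum via $b \mapsto b_0 b$ shows $\sum_{b \in B} \scprod{a}{b} = \scprod{a}{b_0}\sum_{b \in B}\scprod{a}{b}$, forcing the sum to vanish. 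Hence $\fourier{U_B}(a)$ equals $1$ or $0$ according to whether $a \in B^\perp$, which is precisely $\Phi_{B^\perp}(a)$.

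For the second identity, I would note that $\Phi_B = |B|\, U_B$, so by linearity of the Fourier transform and the first part,
\begin{equation}
\fourier{\Phi_B} = |B|\, \fourier{U_B} = |B|\, \Phi_{B^\perp} \, .
\end{equation}
To match the claimed form $|A| U_{B^\perp} = \frac{|A|}{|B^\perp|}\Phi_{B^\perp}$, I need the size relation $|B|\cdot |B^\perp| = |A|$. The cleanest route is via Pontryagin duality: the bicharacter identifies $B^\perp$ with the dual of the quotient $A/B$, so $|B^\perp| = |A/B| = |A|/|B|$. One can also see this by applying the inversion formula for the Fourier transform to $\Phi_B$ and comparing with the first identity applied to $B^\perp$ (using $(B^\perp)^\perp = B$).

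The calculation is essentially routine once the character-orthogonality argument in the first part is in place; the only subtle point is justifying $|B|\cdot|B^\perp| = |A|$, for which invoking Pontryagin duality (already cited in the surrounding text) is the cleanest option. I would present the two identities as a short computation in this order and reference the duality statement rather than reproving it from scratch.
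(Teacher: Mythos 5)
Your proof is correct. The paper itself does not prove this lemma --- it simply cites Ref.~\cite{mao2005_convolutionalfactorgraphs} --- so there is no in-paper argument to compare against; your write-up supplies the standard self-contained derivation. The character-orthogonality step (the shift trick $S = \scprod{a}{b_0}S$ forcing $S=0$ when $a \notin B^\perp$) is exactly right, and you correctly identify that the only nontrivial ingredient for the second identity is the index relation $|B|\cdot|B^\perp| = |A|$, which does follow from the non-degeneracy of the bicharacter (the map $a \mapsto \scprod{a}{\argdot}$ restricted to $B^\perp$ gives an isomorphism onto $\widehat{A/B}$). No gaps.
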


All important groups considered in this work have a direct product structure, i.e.
\begin{equation}
      A = \prod_{i=1}^n A_i\, .
\end{equation}
We will then always use the \emph{product bicharacter} on $A$, which is given by the product of bicharacters on the $A_i$,
\begin{equation}
\scprod{a}{b} = \prod_{i=1}^n \scprod{a_i}{b_i}
\end{equation}
for any $a = (a_1,a_2,\dots, a_n)\in A$ and similar $b$.
The \emph{support} of an element $a \in A$ is
\begin{equation}
\supp(a) = \{ i \in [n] \, : \, a_i \neq I  \} \, .
\end{equation}
We will say that $a$ is \emph{supported} on a region $R \subseteq [n]$ if $\supp(a) \subseteq R$.
The corresponding subgroup to a region $R$ is denoted as $A_R \coloneqq \prod_{i \in R} A_i$.
This is naturally embedded as a subgroup in $A$.
The \emph{complement} of $R \subseteq [n]$ is denoted as $R^c = [n] \setminus R$.
If we use the product bicharacter on $A$, we have that
\begin{equation}
\label{eq:ComplementVsAnnihilator}
A_R^\perp = A_{R^c} \, .
\end{equation}
Given an element $a \in A$, we denote with $a_R$ its restriction to $R$, i.e.\ $a_R = a$ on $R$ and $a_R = 1$ on $R^c$.

Finally, we will be interested in functions with local support.
Given a function $f_R: A_R \rightarrow \mathbb{C}$, there are two important ways to extend it to a function $f: A \rightarrow \mathbb{C}$.
The first is to set $f(a) = 0$ if $a \not \in A_R$.
This is called the \emph{impulsive} extension.
The second is the to set $f(a) = f_R(a_R)$, which is called \emph{periodic} extension.
These two possibilities transform into each other under Fourier transform.

\begin{lemma}[Ref.~\cite{mao2005_convolutionalfactorgraphs}]
\label{lem:ImpulsivePeriodicDuality}
Let $f_R: A_R \rightarrow \mathbb{C}$ and $g_R: A_R \rightarrow \mathbb{C}$ be its Fourier transform (on $A_R$).
Let $f$ be the impulsive extension of $f_R$ and $g$ be the periodic extension of $g_R$.
Then $\fourier{f} = g$.
\end{lemma}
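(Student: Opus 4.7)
The plan is to prove $\fourier{f}=g$ by direct computation from the definition of the Fourier transform, relying on two structural facts already established in the appendix: the product bicharacter on $A = \prod_i A_i$ factorizes across the regions $R$ and $R^c$, and every group character sends the identity to $+1$.

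First I would substitute into \cref{eq:FourierDef} to write $\fourier{f}(a) = \sum_{b \in A}\scprod{a}{b}f(b)$. Because $f$ is the impulsive extension of $f_R$, the summand vanishes for every $b\notin A_R$, so the outer sum collapses to $b\in A_R$, where $A_R$ is embedded as the subgroup whose $R^c$-components equal the identity. This reduction is essentially the only use of the impulsive hypothesis and isolates the content of the proof into a computation on $A_R$.

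Next I would split the bicharacter using the product structure. For $b\in A_R$, writing $a=(a_R,a_{R^c})$ and $b=(b_R,1_{R^c})$, the product bicharacter factorizes as $\scprod{a}{b} = \scprod{a_R}{b_R}\cdot\prod_{i\in R^c}\scprod{a_i}{1}$, and each factor in the $R^c$-product equals $1$ because characters map the identity to $1$. This leaves $\scprod{a}{b}=\scprod{a_R}{b_R}$, so the remaining sum is exactly $\fourier{f_R}(a_R)$ computed with the bicharacter on $A_R$. By assumption this equals $g_R(a_R)$, and the definition of the periodic extension gives $g_R(a_R)=g(a)$, so $\fourier{f}(a)=g(a)$ pointwise.

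I do not anticipate any real obstacle: the whole argument is bookkeeping once the two structural facts are in hand. The only point worth a line of care is verifying that the bicharacter used in defining $g_R = \fourier{f_R}$ on $A_R$ coincides with the restriction of the product bicharacter on $A$ to $A_R\times A_R$, but this is immediate from the definition of the product bicharacter given just above in the appendix.
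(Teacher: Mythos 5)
Your computation is correct: collapsing the sum to $b\in A_R$ via the impulsive hypothesis, using that each $\scprod{a_i}{\argdot}$ is a character so the $R^c$-factors of the product bicharacter equal $1$, and identifying the remaining sum with $\fourier{f_R}(a_R)=g_R(a_R)=g(a)$ is exactly the right argument, and the point you flag about the restricted bicharacter agreeing with the bicharacter on $A_R$ is indeed the only detail needing a word. Note that the paper itself gives no proof of this lemma --- it is imported from Ref.~\cite{mao2005_convolutionalfactorgraphs} --- so there is nothing to compare against; your direct verification is a complete and appropriately elementary substitute.
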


We will mainly work with three groups.
These are the effective Pauli group $\pauli^n$, the group of bit-strings $\Ftwo^m$, and their direct product
$\paulids \coloneqq \pauli^n \times \Ftwo^m$.
Since all elements of these groups have order two, bicharacters of these groups will only take values $\pm 1$.

For $\pauli^n$, the bicharacter encodes commutation relations, and is also called scalar commutator,
\begin{align}
\scprod{a}{e} = \begin{cases} +1, & a \text{ and } e \text{ commmute in } \pauliphases^n \\
                -1, & a \text{ and } e \text{ anti-commmute in } \pauliphases^n \end{cases} \, .
\end{align}
Note that this is the product bicharacter when we view $\mathsf{P}^n = \prod_{i=1}^n \mathsf{P}^1$.
On $\Ftwo^m$, we use the bicharacter that is related to the usual scalar product,
\begin{equation}
\scprod{e}{f} = (-1)^{\sum_i e_if_i} \, ,
\end{equation}
and again this coincides with the product bicharacter.
Finally, on $\paulids$, we directly use the product bicharacter
\begin{equation}
\label{eq:BicharacterPauliDS}
\scprod{(a,e)}{(b,f)} = \scprod{a}{b} \cdot \scprod{e}{f} \, .
\end{equation}

\subsection{Setting and main result}
Now, we state and prove our main result in an abstract setting first.
For ease of exposition, we still stick to terminology close to that of \ac{QEC}.

We consider the group $A = \prod_{i = 1}^n A_i$, where each $A_i$ is either $\pauli$ or $\Ftwo$.
It comes equipped with the product bicharacter.
Both errors and measurements are described as element of $A$.

We are interested in estimating an error channel described by a probability distribution $P: A \rightarrow [0,1]$.
For this purpose,
we assume that we have access to a group of \emph{measurements} $\meas \subseteq A$.
The assumption that the measurements from a group is relatively weak.
In the context of \ac{QEC}, we measure a set of generators and all other outcomes are defined by products of the generator outcomes.
We will perform multiple rounds of measurements, and assume that before each round an independent error $e \in A$ occurs.
The outcome of measurement $s \in \meas$ is described by $\scprod{s}{e}$.
We will refer to this as a \emph{phenomenological} noise model, since errors are independent and identically distributed between rounds and no new errors arise during the round of measurements.
Errors that give a $+1$ outcome for every measurement $s \in \meas$ are called \emph{undetectable}.
The set of undetectable errors is exactly $\undet = \meas^\perp$.
Furthermore, we will only be interested in estimating the channel up to some logical equivalence, described by a subgroup $\gauge \subseteq A$ which we will call \emph{gauge group}.
An overview of these groups and their relations is give in \cref{fig:groups}.
Errors differing only by an element $s \in \gauge$ are considered logically equivalent.
More precisely, we are interested in estimating the \emph{logical channel} $P_L$ which is obtained by averaging over cosets of $\gauge$, resulting in
\begin{equation}
\begin{split}
&P_L: A \rightarrow [0,1] \, ,\\
&P_L(e) = \frac{1}{|\gauge|} \sum_{s \in \gauge} P(es) \, .
\end{split}
\end{equation}
In the setting of stabilizer codes, $P_L$ describes the action of the noise on the encoded information.
The logical channel can be conveniently expressed as a the convolution
\begin{equation}
\label{eq:LogChannelConvolution}
P_L = P \conv U_{\gauge}
\end{equation}
of $P$ with the uniform distribution $U_\gauge$.
Complementary to the gauge group, we define $\logops \coloneqq \gauge^\perp$ and call this the set of \emph{logical operators}.
Finally, we will require that these sets are related by the dual inclusion relations $\gauge \subseteq \undet$ and $\meas \subseteq \logops$ (one implies the other by taking annihilators). The condition $\gauge \subseteq \undet$ means that logically equivalent errors must have the same measurement outcomes.

\begin{figure}
\tikzset{
  symbol/.style={
    draw=none,
    every to/.append style={
      edge node={node [sloped, allow upside down, auto=false]{$#1$}}}
  }
}
\begin{tikzcd}
&\gauge \arrow[d, leftrightarrow, "\perp"'] \arrow[r, symbol=\subseteq] &\undet \\
&\logops \arrow[r, symbol=\supseteq]  &\meas \arrow[u, leftrightarrow, "\perp"']
\end{tikzcd}
\caption{An overview over the abstract setting, described by four groups.
The main ingredients are a group of measurements $\meas$ and a gauge group $\gauge$.
The gauge group describes which errors are considered logically trivial.
The annihilator of $\meas$ is the group of undetectable errors $\undet=\meas^\perp$, and the annihilator of $\gauge$ is the group of logical operators $\logops=\gauge^\perp$.
The groups fulfill the dual inclusions $\gauge \subseteq \undet$ and $\meas \subseteq \logops$.}
\label{fig:groups}
\end{figure}

In this setting, our main result can be stated as follows:
\begin{theorem}
\label{thm:Main}
Let $\gauge,\meas \subseteq A$ be a gauge group and measurement group fulfilling $\gauge \subseteq \meas^\perp$.
If the error channel $P$ is correctable in the sense of \cref{def:CorrectableNoise}, then the logical channel $P_L$ defined by the gauge group $\gauge$ can be uniquely estimated
from the expectations of the measurements $\meas$.
\end{theorem}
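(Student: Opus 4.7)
The plan is to adapt the Fourier-moment strategy of the main text to this more general setting where $\meas \neq \gauge$ is allowed. First, I would identify the Fourier data on both sides. The expectation of any measurement $s \in \meas$ equals $E(s) = \fourier{P}(s)$, so the experimental input is the restriction $E_\meas$. Dually, rewriting $P_L = P \conv U_\gauge$ as in \cref{eq:LogChannelConvolution} and using \cref{lem:IndicatorDuality} to obtain $\fourier{U_\gauge} = \Phi_\logops$ (note $\logops = \gauge^\perp$), the logical channel is Fourier-equivalent to $E_\logops$. The task therefore reduces to recovering $E_\logops$ from $E_\meas$, and the assumption $\gauge \subseteq \meas^\perp$ translates to the dual inclusion $\meas \subseteq \logops$ that I will need below.

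Next, I would introduce canonical moments $F$ via Möbius inversion as in the main text. Correctability forces $F(a) = 1$ unless $\supp(a) \subseteq \gamma$ for some $\gamma \in \Gamma$, so only the finitely many entries $F_{\Gamma'}$ are nontrivial. Since $P(I) > \tfrac{1}{2}$ forces all moments to be positive, taking logarithms turns the multiplicative relation $E(a) = \prod_{b \leq a} F(b)$ into the linear system $\log E_X = D_X \log F_{\Gamma'}$ for $X \in \{\meas, \logops\}$, where $D_X[x,a] = [a \leq x]$. Because $\meas \subseteq \logops$, the matrix $D_\meas$ is a row-submatrix of $D_\logops$, so $\log E_\meas$ determines $\log E_\logops$ uniquely iff every row of $D_\logops$ lies in the row span of $D_\meas$, which holds iff $\rank D_\meas = \rank D_\logops$.

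I would verify this rank equality by proving the stronger proportionality $D_\meas^T D_\meas \propto D_\logops^T D_\logops$. Entry by entry, this reduces to the combinatorial identity
\begin{equation*}
|\{l \in \logops : a \leq l \text{ and } b \leq l\}| = |\logops/\meas| \cdot |\{s \in \meas : a \leq s \text{ and } b \leq s\}|
\end{equation*}
for all $a,b \in \Gamma'$. I would prove it by partitioning $\logops$ into cosets of $\meas$ and invoking a generalized cleaning lemma: for any correctable region $R$, each coset $[l] \in \logops/\meas$ has a representative with $\supp(l) \cap R = \emptyset$. Applying this to $R = \gamma_a \cup \gamma_b$, which is correctable by hypothesis, gives $a \leq ls \iff a \leq s$ and likewise for $b$, and the identity follows by the same telescoping computation as in the main text.

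The principal obstacle is establishing the generalized cleaning lemma. In the stabilizer case $\meas = \gauge$ this is the classical Bravyi--Terhal statement (and its recent sharpening by Kalachev--Panteleev), but here the cosets are those of $\logops/\meas$, so the appropriate notion of correctable region must ensure $\logops \cap A_R \subseteq \meas$ rather than the standard $\logops \cap A_R \subseteq \gauge$. Formulating this correctability condition precisely, checking that it is the one encoded in \cref{def:CorrectableNoise}, and adapting the duality-based proof of the classical cleaning lemma to this asymmetric pair $(\meas, \gauge)$ is where the delicate work lies; everything else is a direct translation of the main-text argument.
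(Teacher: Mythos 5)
Your architecture coincides with the paper's proof step for step: Fourier moments, canonical moments via M\"obius inversion, the log-linear system with $D_\meas$ a row-submatrix of $D_\logops$, the proportionality $D_\meas^TD_\meas \propto D_\logops^TD_\logops$ with constant $|\logops/\meas|$, and the reduction to a coset-counting identity settled by a cleaning lemma for $\logops/\meas$. All of those steps are carried out correctly and match the paper.

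The one place you stop short is also the one place your pointer goes astray. You guess that the correctability notion needed to clean $\logops/\meas$ off a region $R$ is $\logops \cap A_R \subseteq \meas$. It is not: the paper's \cref{def:CorrectableRegion} demands $\undet \cap A_R \subseteq \gauge$ (every undetectable error supported on $R$ is logically trivial), and this is exactly the hypothesis that the first condition of \cref{def:CorrectableNoise} places on $\gamma_1\cup\gamma_2$. The conversion from that condition into a cleaning statement for the \emph{other} pair $(\logops,\meas)$ is precisely the content of the abstract cleaning lemma (\cref{lem:CleaningAbstract}): taking $\eta=\meas$, $\xi=\gauge$, $\alpha=A_R$ gives $|(\undet\cap A_R)/(\gauge\cap A_R)|\cdot|(\logops\cap A_{R^c})/(\meas\cap A_{R^c})| = |\logops/\meas|$, so correctability of $R$ kills the first factor and forces the canonical embedding $(\logops\cap A_{R^c})/(\meas\cap A_{R^c})\hookrightarrow\logops/\meas$ of \cref{lem:QuotientEmbedding} to be onto, which is the cleaning statement you need. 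Your candidate condition $\logops\cap A_R\subseteq\meas$ is the dual one: plugging $\alpha=A_{R^c}$ into the same identity shows it is equivalent to cleaning $\undet/\gauge$ off $R$, not $\logops/\meas$, and the two conditions coincide only in the stabilizer-code case $\meas=\gauge$. So to complete the proof, keep \cref{def:CorrectableRegion} as stated and derive the concrete cleaning lemma from \cref{lem:CleaningAbstract}; with that in hand the rest of your argument goes through verbatim.
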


To recover the result for stabilizer codes, as treated in the main text, we set $A = \pauli^n$. The set of measurements $\meas$ and the gauge group $\gauge$ are then identical, both equal to the stabilizer  group of the code.
Consequently, the undetectable errors $\undet$ and the logical operators $\logops$ also coincide, and are both given by the logical operators of the code.
Later, we will also explain how to specialize our setting to the more general classes of subsystem codes (\cref{sec:subsystem}) and quantum data-syndrome codes (\cref{sec:data-syndrome}) with phenomenological noise.

\subsection{Moments}
We start the proof of \cref{thm:Main} by describing the estimation problem in Fourier space. We define the moments
\begin{equation}
E = \fourier{P} \, .
\end{equation}
Since the Fourier transform is invertible, the set of all moments $(E(a))_{a \in A}$ fully characterizes the channel $P$.
Furthermore, for an element $s \in \meas$, $E(s)$ is the expectation of the measurement of $s$ in repeated rounds.
Thus, the moments corresponding to $\meas$ can be obtained from our measurements.

The logical channel can also be fully characterized by a subset of moments.
Applying a Fourier transform to \cref{eq:LogChannelConvolution} and using \cref{lem:IndicatorDuality} results in
\begin{equation}
\fourier{P_L} = E \cdot \Phi_{\logops} \, .
\end{equation}
Thus, to obtain the logical channel, we have to compute all the moments corresponding to $\logops$, while we can only measure moments corresponding to $\meas$.
We will see that this is indeed possible, assuming that the error channel is not correlated over too large regions.
These assumptions on the noise are formalized in the next section.

\subsection{Local noise}
To formalize the assumption of limited correlations, we assume the total error in each round is a product of many local errors that occur independently.
The noise then factorizes into a set of local channels, characterized by the corresponding set of local supports $\Gamma \subseteq 2^{[n]}$.
For each $\gamma \in \Gamma$, there is a local channel $P_\gamma: A_\gamma \mapsto [0,1]$.
We extend the local channels impulsively to $A$, i.e. we set $P_\gamma(e) = 0$ if $e \not \in A_\gamma$.
The total error distribution is then given by
\begin{equation}
\label{eq:LocalNoise}
P = \Conv_{\gamma \in \Gamma} P_\gamma \, .
\end{equation}
Denoting $E_\gamma = \fourier{P_\gamma}$, we obtain
\begin{equation}
E = \prod_{\gamma \in \Gamma} E_\gamma \, .
\label{eq:Moments-Decomposition}
\end{equation}
Here, each $E_\gamma$ must be extended periodically from $A_\gamma$ to $A$, due to \cref{lem:ImpulsivePeriodicDuality}.
More explicitly, we have $E_\gamma(e) = E_\gamma(e_\gamma)$.

We assume that the individual regions are small enough to not support logically non-trivial undetectable errors.
To formalize this, we define a notion of correctable region, which is inspired by the setting of topological codes \cite{bravyi2009_distanceindimensiond}.
\begin{definition}
\label{def:CorrectableRegion}
A region $R \subseteq [n]$ is called \emph{correctable} if every undetectable error $e\in\undet$ supported on $R$ is logically trivial, i.e.\ $e\in\gauge$.
\end{definition}

Using this definition, we can state our assumptions on the noise.
\begin{definition}
\label{def:CorrectableNoise}
A channel of the form \eqref{eq:LocalNoise}
is called \emph{correctable} if the following two conditions are fulfilled:
\begin{enumerate}
\item For all $\gamma_1,\gamma_2 \in \Gamma$, $\gamma_1 \cup \gamma_2$ is a correctable region.
\item All moments are positive, i.e.\ $E(a) > 0$ for all $a \in A$.
\end{enumerate}
\end{definition}

The first condition states that correlations can not be so large that uncorrectable errors occur frequently.
Instead uncorrectable errors should only be allowed to occur as a combination of many smaller independent errors.
We will later relate this condition to the distance of a code.
The second condition essentially states that the error rates are not too large.
A sufficient conditions for this is $P_\gamma(I) > \frac{1}{2}$ for all $\gamma \in \Gamma$. An alternative, but less practical, sufficient condition is $P(I) > \frac{1}{2}$.

\subsection{Canonical moments}
The factorization \eqref{eq:Moments-Decomposition} can be used to find a more compact characterization of the moments $E$.
Intuitively, we note that the moment $E(a)$ captures correlations across all substrings of $a$.
In particular, $E(a)$ can be non-trivial even if $a$ is not contained in a support $\gamma \in \Gamma$ of our noise model.
We will find an alternative set of moments $F: A \rightarrow \mathbb{R}$, called \emph{canonical moments}, that only capture correlations across their whole support. In particular, the canonical moments fulfill $F(a) = 1$ if $a$ is not contained in a support.
Thus, a small set of low weight canonical moments is sufficient to fully characterize the channel.

Formally, we define the canonical moments by a Möbius inversion.
Möbius inversion is a generalization of the inclusion-exclusion principle of combinatorics \cite{Aigner2007_ACourseInEnumeration}.
Essentially, we divide out correlations on substrings of $a$ from the moment $E(a)$, while being careful not to double count any substrings.
This leaves only correlations across the full support.
In order to do this, we
consider $A$ as a partially ordered set (poset), where the ordering is the substring relation.

\begin{definition}[substring ordering]
We say $a \in A$ is a \emph{substring} of $b \in A$ if for all $i \in [n]$ either $a_i = I$ or $a_i = b_i$.
In this case we write $a \leq b$.
\end{definition}

We will need the \emph{Möbius function} of this poset.
For our purposes, the Möbius function is defined to be the function fulfilling the following inversion theorem, which can be found e.g.\ in Ref.~\cite[Theorem 5.5]{Aigner2007_ACourseInEnumeration} or \cite{roman2006_fieldtheory} in case of the multiplicative version.

\begin{definition}[Möbius function and Möbius inversion]
\label{def:Moebius}
Let $S$ be a partially ordered set.
The Möbius function $\mu$ of $S$ is the function $\mu: S \times S \mapsto \mathbb{R}$ such that for any two functions $f,g: S \mapsto \mathbb{R}$,
\begin{equation}
f(t) = \prod_{s \leq t} g(s) \, ,
\end{equation}
if and only if
\begin{equation}
g(t) = \prod_{s \leq t} f(a)^{\mu(s,t)} \, .
\end{equation}
\end{definition}

In our setting, we obtain the following.

\begin{lemma}
The Möbius function of $(A, \leq)$ is given by
\begin{equation}
\begin{split}
&\mu: A \times A \rightarrow \mathbb{R}\, , \\
&\mu(b,a) =
\begin{cases}
      (-1)^{|a|-|b|} & \text{if } b \leq a \, ,
      \\
      0 & \text{otherwise. }
\end{cases}
\end{split}
\label{eq:MobiusFunction-def}
\end{equation}
\end{lemma}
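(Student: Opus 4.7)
The plan is to recognize $(A,\leq)$ as a direct product of simple coordinate posets and then invoke the classical product formula for Möbius functions. For each $i \in [n]$, define the single-coordinate poset $(A_i, \leq_i)$ by declaring $x \leq_i y$ iff $x = I$ or $x = y$. Directly from the substring definition, $b \leq a$ in $A$ holds iff $b_i \leq_i a_i$ for every $i$, so as posets $(A, \leq) = \prod_{i=1}^{n}(A_i, \leq_i)$.

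First I would reconcile the multiplicative Möbius inversion of \cref{def:Moebius} with the standard additive one. Since the canonical moments are obtained from strictly positive moments (guaranteed by the positivity condition of \cref{def:CorrectableNoise} in the main application), we may take logarithms; the multiplicative inversion then becomes the familiar additive Möbius inversion, and $\mu$ is uniquely characterized as the classical Möbius function of the poset, determined by $\mu(s,s) = 1$ and $\sum_{s \leq u \leq t}\mu(s,u) = 0$ for all $s < t$. Next, I would compute $\mu_i$ on each factor $(A_i, \leq_i)$: the element $I$ is the unique minimum, and all remaining elements are pairwise incomparable atoms. A direct application of the defining recursion yields $\mu_i(x,x) = 1$ for every $x$, $\mu_i(I, x) = -1$ for $x \neq I$, and $\mu_i$ vanishing on incomparable pairs.

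Finally, I would apply the standard product formula $\mu(b, a) = \prod_{i=1}^{n} \mu_i(b_i, a_i)$, which follows from the fact that intervals in a product poset factor as products of intervals, together with uniqueness of the Möbius function; a clean induction on $n$ suffices. The right-hand side is nonzero precisely when each factor is nonzero, i.e.\ when $b_i \leq_i a_i$ for all $i$, i.e.\ when $b \leq a$. In that case each factor equals $+1$ if $b_i = a_i$ and $-1$ if $b_i = I \neq a_i$, so the product equals $(-1)^k$ with $k = |\{i : a_i \neq I,\, b_i = I\}|$. Since $b \leq a$ implies $\supp(b) \subseteq \supp(a)$, this count is exactly $|a| - |b|$, which is \eqref{eq:MobiusFunction-def}. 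The only genuine subtlety is bridging the multiplicative and additive formulations of Möbius inversion; after that, the result is an immediate consequence of textbook poset combinatorics.
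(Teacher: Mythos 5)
Your proof is correct, and it arrives at the same elementary combinatorial fact as the paper, but by a somewhat different route. The paper fixes $a$ and observes that the down-set $\{b \in A : b \leq a\}$ is isomorphic to the Boolean lattice $2^{\supp(a)}$ ordered by inclusion, then simply quotes the well-known M\"obius function $(-1)^{|t|-|s|}$ of that lattice (offering an induction from Aigner as an alternative). You instead decompose the entire poset $(A,\leq)$ as a product of single-coordinate posets, compute each factor's M\"obius function directly from the defining recursion ($\mu_i(x,x)=1$, $\mu_i(I,x)=-1$, zero on incomparable pairs), and assemble the answer with the product formula for M\"obius functions of product posets. The two arguments are close cousins---the Boolean-lattice formula is itself standardly derived from the product formula---but yours is more self-contained, and it also attends to a point the paper glosses over: \cref{def:Moebius} states M\"obius inversion multiplicatively for arbitrary real-valued functions, and your explicit reduction to the additive formulation via logarithms (legitimate here because the relevant moments are strictly positive for correctable noise) is the cleanest way to see that the $\mu$ so defined coincides with the classical M\"obius function of the poset. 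Either route is fine; yours costs a few more lines and buys a proof that does not lean on a quoted result.
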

\begin{proof}
For any given $a \in A$, the poset $\{b \in A \, : \, b \leq a \}$ is isomorphic to the poset $\{s \subseteq \supp(a) \}$ ordered by set inclusion. The Möbius function of this is well known to be $\mu(s,t) =  (-1)^{|t| - |s|}$.
Alternatively, one can use \cite[Proposition 5.4]{Aigner2007_ACourseInEnumeration} and induction.
\end{proof}
Now, we define the \emph{canonical moments} as
\begin{equation}
F(a) \coloneqq \prod_{b \leq a} E(b)^{\mu(b,a)} \, .
\label{eq:CanonicalMoments-def}
\end{equation}
This definition essentially corresponds to the canonical factorization of a factor graph describing the moments, compare Refs.~\cite{koller_pgm, abbeel_learningfactorgraphs}.
The canonical moments have two important properties.

\begin{lemma}[Properties of canonical moments]
\label{lem:CanonicalMomentProperties}
\hfill
\begin{enumerate}
\item The moments can be expressed by the canonical moments as
\begin{equation}
E(a) = \prod_{b \leq a} F(b) \, .
\end{equation}
\item For any $a \in A$ such that $a$ is not contained in any support, i.e.\ $\supp(a) \not \subseteq \gamma$ for all $\gamma \in \Gamma$, we have $F(a) = 1$.
\end{enumerate}
\end{lemma}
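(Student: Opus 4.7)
The plan is straightforward for both parts, since the essential content is combinatorial. For part 1, I will observe that $F(a)$ is defined exactly in the form given by the Möbius inversion theorem. Condition 2 of correctable noise (positivity of all moments) allows me to pass to logarithms legitimately, so \cref{def:Moebius} applies directly to $f = E$ and $g = F$, and the ``iff'' immediately yields $E(a) = \prod_{b \leq a} F(b)$. No further work is needed.

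For part 2, the strategy is to take logarithms of \eqref{eq:CanonicalMoments-def} and exploit the factorization \eqref{eq:Moments-Decomposition}, with each $E_\gamma$ extended periodically so that $E_\gamma(b) = E_\gamma(b_\gamma)$. Writing
\begin{equation*}
\log F(a) = \sum_{\gamma \in \Gamma} \sum_{b \leq a} \mu(b,a) \log E_\gamma(b_\gamma)
\end{equation*}
after swapping the order of summation, it suffices to show that the inner sum vanishes for each fixed $\gamma \in \Gamma$, given the assumption $\supp(a) \not\subseteq \gamma$.

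I will then split $\supp(a) = \alpha \sqcup \beta$ with $\alpha = \gamma \cap \supp(a)$ and $\beta = \supp(a) \setminus \gamma$, the latter being nonempty by hypothesis. Any $b \leq a$ factors uniquely as $b = b_\alpha \cdot b_\beta$ with the two factors varying independently over substrings of $a_\alpha$ and $a_\beta$ respectively. By periodicity $E_\gamma(b_\gamma)$ depends only on $b_\alpha$, while $\mu(b,a) = (-1)^{|a|-|b|}$ factors as $(-1)^{|a_\alpha|-|b_\alpha|}(-1)^{|a_\beta|-|b_\beta|}$. The inner sum therefore factorizes, and the $b_\beta$-part becomes
\begin{equation*}
\sum_{b_\beta \leq a_\beta} (-1)^{|a_\beta| - |b_\beta|} = \sum_{k=0}^{|\beta|} \binom{|\beta|}{k} (-1)^{|\beta| - k} = 0
\end{equation*}
by the binomial theorem, since $|\beta| \geq 1$. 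This forces $\log F(a) = 0$, i.e.\ $F(a) = 1$.

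The only delicate point is bookkeeping: I must verify that the factorization $b = b_\alpha \cdot b_\beta$ is well-defined (which uses $b \leq a$, hence $\supp(b) \subseteq \supp(a)$) and that at each position $i$ with $a_i \neq I$ there are exactly two choices of $b_i$ regardless of whether $A_i = \pauli$ or $A_i = \Ftwo$ (namely $b_i \in \{I, a_i\}$ in both cases), so that the count producing the binomial coefficient is correct. Once these routine checks are in place, the binomial cancellation concludes the proof.
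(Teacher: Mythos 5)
Your proof is correct and follows essentially the same route as the paper's: part 1 is immediate from the multiplicative M\"obius inversion, and for part 2 you split each substring $b \leq a$ into its part on $\gamma$ and its part off $\gamma$, use periodicity of $E_\gamma$, and annihilate the off-$\gamma$ factor with the alternating binomial sum, exactly as the paper does. The only cosmetic difference is that you work additively with logarithms while the paper manipulates the products and exponents $\mu(b,a)$ directly, which sidesteps the (minor) need for positivity of each individual $E_\gamma$.
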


\begin{proof}
The first statement is given by the \cref{def:Moebius} of the Möbius function.

Now, we prove the second statement.
First, from the definition \eqref{eq:CanonicalMoments-def} of the canonical moments $F$ and the decomposition \eqref{eq:Moments-Decomposition} we
obtain
\begin{equation}
F(a) = \prod_{b \leq a} \prod_{\gamma \in \Gamma} E_\gamma(b)^{\mu(b,a)} = \prod_{\gamma \in \Gamma} \prod_{b \leq a} E_\gamma(b)^{\mu(b,a)} \, .
\end{equation}
We can evaluate the second product by splitting
it into products over
$B = \{ b \in A_{\gamma} \, : \, b \leq a \}$ and $B^c = \{c \in A_{\gamma^c} \, : \, c\leq a \}$
as
\begin{align*}
&\prod_{b \leq a} E_\gamma(b)^{\mu(b,a)} = \prod_{b' \in B} \prod_{c' \in B^c} E_\gamma(b')^{\mu((b',c'),a)} \\
&= \prod_{b' \in B} E_\gamma(b')^{\sum_{c' \in B^c} \mu((b',c'),a)} \, ,
\end{align*}
where we have used
the periodicity $E_\gamma(b) = E_\gamma(b_\gamma)$ and have denoted $b = (b',c')\in A = A_\gamma \times A_{\gamma^c}$.
From the explicit expression \eqref{eq:MobiusFunction-def} for the Möbius function we
obtain
\begin{align*}
&\sum_{c' \in B^c} \mu((b',c'),a)
= \mu(b',a) \sum_{c' \in B^c} (-1)^{|c'|} \\
&= \mu(b',a) \sum_{0 \leq w \leq |\gamma^c \cap \supp(a)| } (-1)^{|c|} \binom{|\gamma^c \cap \supp(a)|}{w} \\
&= \mu(b',a) \kron{|\gamma^c \cap \supp(a)| = 0} = \mu(b',a)\kron{\supp(a) \subseteq \gamma}
\,,
\end{align*}
where in the second equality we sorted the elements of $B^c$ by their weight, and the third equality follows from the fact that $\sum_{i=0}^n (-1)^i\binom{n}{i} = \kron{n = 0}$.

Putting everything together proves \cref{lem:CanonicalMomentProperties}.
\end{proof}

We conclude that $E$ and hence the physical channel $P$ is characterized by the set of low weight canonical moments corresponding to
\begin{equation}
\Gamma' \coloneqq \{a \in A \, : \, \exists \gamma \in \Gamma \text{ such that } \supp(a) \subseteq \gamma \} \, .
\label{eq:Gamma-prime_Apdx}
\end{equation}
Indeed,
we can express the regular moments by the equation system
\begin{equation}
\label{eq:EqSystem1}
E(a) = \prod_{b \in \Gamma' : b \leq a} F(b) \, .
\end{equation}
Since we can measure $E(s)$ for $s \in \meas$, the estimation problem can now be phrased in terms of these equations.

\subsection{Rank of the coefficient matrix}
Since we assumed that all moments are positive, \cref{eq:EqSystem1} corresponds to a linear system after taking logarithms.
This linear system can be compactly expressed by a coefficient Matrix $D$, whose rows are labeled regular moments and whose columns are labeled by canonical moments. For $a \in A$ and $b \in \Gamma'$, the corresponding entry of $D$ is
\begin{equation}
D[a,b] = \begin{cases} 1 & b \leq a \\ 0 & \text{ otherwise } \end{cases} \, .
\label{eq:CoefficientMatrix-def}
\end{equation}
In particular, the measurements we can perform are characterized by the submatrix $D_\meas$, whose rows are labeled by element of $\meas$, and the logical channel is similarly characterized by the submatrix $D_{\logops}$.
Note that since $\meas \subseteq \logops$, $D_\meas$ is a submatrix of $D_\logops$.
The estimation problem can be solved if the rows of $D_{\logops}$ are linearly dependent on the rows of $D_\meas$.

Based on these considerations, to proof \cref{thm:Main}, we need to show
\begin{equation}
\rank(D_\meas) = \rank(D_{\logops}) \,
\end{equation}
or equivalently
\begin{equation}
\rank(D_\meas^TD_\meas) = \rank(D_{\logops}^TD_{\logops}) \, .
\label{eq:RankCondition}
\end{equation}

To show \cref{eq:RankCondition}, we first reduce it to a counting problem.
From the definition \eqref{eq:CoefficientMatrix-def} of $D$, we obtain that
\begin{equation}
D_\meas^TD_\meas[a,b] = |\{ s \in \meas \, : \, a \leq s \text{ and } b \leq s \}| \, ,
\end{equation}
and analogously for $D_{\logops}$.
The advantage of this formulation is that we only need to consider global properties of groups $\meas$ and $\logops$, but not the specific form of their elements.
We will prove that $D_\meas^TD_\meas \propto D_{\logops}^TD_{\logops}$, i.e. that
\begin{equation}
\label{eq:Counting}
\begin{split}
|\{ s \in \meas \, : \, a \leq s \text{ and } b \leq s \}| \\
= \alpha |\{ l \in \logops \, : \, a \leq l \text{ and } b \leq l \}| \, ,
\end{split}
\end{equation}
for a constant $\alpha$ independent of $a,b$.

\subsection{A general cleaning lemma}
Our proof of \cref{eq:Counting} relies on an abstract variant of the cleaning lemma which was recently derived by \citet{kalachev2022_cleaninglemma}.
The original cleaning lemma was formulated for stabilizer codes in \cite{bravyi2009_distanceindimensiond}.
It states that any correctable region can be ``cleaned'' of logical operators, i.e.\ any logical operator has a representative that it is supported outside of this correctable region.
Since all logical operators arise from their representatives by multiplication with stabilizers, this means that on each correctable region, the group of logical operators and the stabilizer group look essentially identical.
This then solves the counting problem \eqref{eq:Counting}.

We will derive a similar result in our more general setting, however the specifics are a bit different.
For standard stabilizer codes, the gauge group $\gauge$ and the group of measurements are both given by the stabilizer group $\stabs$. Then $\stabs \subseteq \stabs^\perp = \logops$, and the cleaning lemma is then a statement about the quotient group $\logops / \stabs$.
Since in general, the gauge group and the measurements do not coincide, the situation is more complicated.
Instead of the inclusion $\stabs \subseteq  \logops$, we will consider the dual inclusions $\gauge \subseteq \undet$ and $\meas \subseteq \logops$.
The cleaning lemma will then be a statement about $\logops / \meas$.
One can also find a similar statement about $\undet / \gauge$.

We start with the abstract version of the cleaning lemma mentioned above.
Translated to our setting, it states the following.

\begin{lemma}[abstract cleaning lemma \cite{kalachev2022_cleaninglemma}]
\label{lem:CleaningAbstract}
For any three subgroups $\eta,\xi$ and $\alpha$ of an Abelian group $A$ such that $\xi \subseteq \eta^\perp$, we have
\begin{equation}
|(\eta^\perp \cap \alpha) / (\xi \cap \alpha)| \cdot |(\xi^\perp \cap \alpha^\perp) / (\eta \cap \alpha^\perp)| = |\xi^\perp / \eta| \, .
\end{equation}
\end{lemma}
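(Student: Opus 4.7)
The plan is to reduce the claim to a bare arithmetic identity about cardinalities of subgroups and their annihilators, which can then be verified by invoking only two standard facts about finite Abelian groups equipped with a non-degenerate bicharacter. First I would check that all three quotients are well-defined: the inclusion $\xi \cap \alpha \subseteq \eta^\perp \cap \alpha$ is immediate from the hypothesis $\xi \subseteq \eta^\perp$; taking annihilators of that hypothesis yields $\eta \subseteq \xi^\perp$, which simultaneously validates the second quotient on the left and the quotient on the right.

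Writing $N := |A|$, the identity to prove is equivalent, after clearing denominators, to
\begin{equation*}
|\eta^\perp \cap \alpha| \cdot |\xi^\perp \cap \alpha^\perp| \cdot |\eta| = |\xi \cap \alpha| \cdot |\eta \cap \alpha^\perp| \cdot |\xi^\perp| \, .
\end{equation*}
My plan is to attack this using only (i) $|B| \cdot |B^\perp| = N$ for any subgroup $B \subseteq A$, and (ii) $(B+C)^\perp = B^\perp \cap C^\perp$, which combined with the second-isomorphism identity $|B+C| \cdot |B \cap C| = |B| \cdot |C|$ yields $|B \cap C| \cdot N = |B| \cdot |C| \cdot |B^\perp \cap C^\perp|$. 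As sanity checks, setting $\alpha = \{I\}$ or $\alpha = A$ collapses the left-hand side to one factor and reproduces the relation $|\xi^\perp/\eta| = |\eta^\perp/\xi|$, which itself follows from (i).

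The argument is then essentially a bookkeeping computation: apply (ii) to the pair $(\xi, \alpha)$ and separately to the pair $(\eta, \alpha^\perp)$ (using $(\alpha^\perp)^\perp = \alpha$) to obtain two formulas for $|\xi \cap \alpha|$ and $|\eta \cap \alpha^\perp|$; multiply these formulas together; cancel one factor of $N$ by using $|\alpha| \cdot |\alpha^\perp| = N$ from (i); and finally substitute $|\xi| = N/|\xi^\perp|$ via (i) once more. The right-hand side of the target identity then collapses onto the left-hand side.

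The main obstacle, such as it is, is purely combinatorial: one must keep track of four subgroups and their annihilators and ensure that annihilation swaps intersections with sums at the correct places. A more conceptual alternative would be to construct an explicit non-degenerate pairing exhibiting one of the left-hand quotients as the Pontryagin dual of the other (modulo the right-hand factor), but the direct cardinality computation seems substantially shorter and avoids having to track explicit homomorphisms.
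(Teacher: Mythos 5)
Your proposal is correct, and it takes a genuinely different route from the paper. The paper does not compute anything: its proof of \cref{lem:CleaningAbstract} consists entirely of citing Theorem 3.10 of Ref.~\cite{kalachev2022_cleaninglemma} and supplying a translation dictionary (the lattice of subgroups of $A$ with join/meet given by generated subgroup/intersection, the grading by group order, and quasi-complementation by the annihilator). You instead give a self-contained cardinality computation, and the computation checks out: clearing denominators reduces the claim to $|\eta^\perp \cap \alpha| \cdot |\xi^\perp \cap \alpha^\perp| \cdot |\eta| = |\xi \cap \alpha| \cdot |\eta \cap \alpha^\perp| \cdot |\xi^\perp|$; applying your identity $|B \cap C|\cdot N = |B|\,|C|\,|B^\perp \cap C^\perp|$ once to $(\xi,\alpha)$ and once to $(\eta,\alpha^\perp)$, multiplying, cancelling $|\alpha|\,|\alpha^\perp| = N$, and substituting $|\xi| = N/|\xi^\perp|$ yields exactly that identity. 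The well-definedness of all three quotients follows from $\xi \subseteq \eta^\perp$ and its annihilated form $\eta \subseteq \xi^\perp$, as you note, and your identity itself rests only on $(B+C)^\perp = B^\perp \cap C^\perp$, $(B^\perp)^\perp = B$, and the second isomorphism theorem, all of which hold for a non-degenerate bicharacter on a finite Abelian group. What each approach buys: yours makes the appendix self-contained and makes transparent that the lemma is nothing more than a duality bookkeeping identity, at the cost of obscuring the structural content (the paper's route exhibits the statement as an instance of a general result on graded lattices with quasi-complementation, which is what justifies calling it a ``cleaning lemma'' and connects it to the broader \ac{QEC} literature). One presentational caveat: your argument establishes an equality of cardinalities of quotient \emph{sets}; that is all \cref{lem:CleaningConcrete} needs, since the subsequent injectivity argument is supplied separately by \cref{lem:QuotientEmbedding}, so nothing is missing.
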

\begin{proof}
This follows from Theorem 3.10 of \cite{kalachev2022_cleaninglemma}, using the following translation.
The lattice $L$ is the lattice of subgroups of $A$, where the join of two subgroups is the subgroup generated by their union, and the meet of two subgroups is their intersection.
The grading is the $Q^+$-grading given by the size of a group.
The quasi-complementation $\dagger$ is the annihilator $\perp$.
This is similar to the setting of \cite{kalachev2022_cleaninglemma}[Section 5.3].
\end{proof}

As a corollary, we obtain the following ``concrete'' cleaning lemma.

\begin{lemma}[cleaning lemma]
\label{lem:CleaningConcrete}
Let $R \subseteq [n]$ be a correctable region.
Then every coset $[l] \in \logops / \meas$ has a representative $l$ that has no support on $R$, i.e.\  $\supp(l) \cap R = \emptyset$.
\end{lemma}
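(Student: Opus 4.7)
The plan is to derive the concrete cleaning lemma directly from the abstract cleaning lemma by specializing the three subgroups to the physically relevant ones. Specifically, I would apply \cref{lem:CleaningAbstract} with $\eta = \meas$, $\xi = \gauge$, and $\alpha = A_R$. The hypothesis $\xi \subseteq \eta^\perp$ of the abstract lemma then becomes $\gauge \subseteq \meas^\perp = \undet$, which is exactly the standing assumption of \cref{thm:Main}. Using $\eta^\perp = \undet$, $\xi^\perp = \logops$, and the product-bicharacter identity $A_R^\perp = A_{R^c}$ from \cref{eq:ComplementVsAnnihilator}, the abstract cleaning lemma specializes to
\begin{equation}
\bigl|(\undet \cap A_R)/(\gauge \cap A_R)\bigr| \cdot \bigl|(\logops \cap A_{R^c})/(\meas \cap A_{R^c})\bigr| = |\logops/\meas| \, .
\end{equation}

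Next, I would exploit the assumption that $R$ is correctable. By \cref{def:CorrectableRegion}, every element of $\undet$ supported on $R$ already lies in $\gauge$, which means $\undet \cap A_R = \gauge \cap A_R$. Thus the first factor on the left-hand side equals $1$, and we obtain
\begin{equation}
\bigl|(\logops \cap A_{R^c})/(\meas \cap A_{R^c})\bigr| = |\logops/\meas|\, .
\end{equation}

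Finally, I would interpret this equality as surjectivity of the natural homomorphism $\phi\colon \logops \cap A_{R^c} \to \logops/\meas$, $l \mapsto [l]$. Since $\meas \subseteq \logops$, the kernel of $\phi$ is $\meas \cap A_{R^c}$, so the image of $\phi$ has cardinality $|(\logops \cap A_{R^c})/(\meas \cap A_{R^c})|$, which equals $|\logops/\meas|$ by the previous display. Surjectivity of $\phi$ is exactly the statement that every coset $[l] \in \logops/\meas$ has a representative supported on $R^c$, i.e.\ a representative whose support is disjoint from $R$.

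The only nontrivial step is choosing $(\eta,\xi,\alpha) = (\meas,\gauge,A_R)$ so that the abstract identity collapses correctly; after that, the correctability hypothesis kills one of the two index factors and the conclusion is purely a first-isomorphism-theorem bookkeeping argument. No further obstacle arises, since all identifications ($\meas^\perp = \undet$, $\gauge^\perp = \logops$, $A_R^\perp = A_{R^c}$) are already established in \cref{sec:mathematicalbackground}.
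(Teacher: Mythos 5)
Your proposal is correct and follows essentially the same route as the paper: the same specialization $(\eta,\xi,\alpha)=(\meas,\gauge,A_R)$ of the abstract cleaning lemma, the same use of correctability to make the first index factor equal to $1$, and the same conclusion from the resulting cardinality identity. The only cosmetic difference is that you phrase the last step as surjectivity of $\phi\colon \logops\cap A_{R^c}\to\logops/\meas$ via the first isomorphism theorem, whereas the paper packages the identical fact as the injective embedding of \cref{lem:QuotientEmbedding} forcing equality of two groups of the same order.
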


In order to prove this statement, we make use of a simple technical lemma.

\begin{lemma}
\label{lem:QuotientEmbedding}
Let $\eta$ and $\alpha$ be subgroups of an Abelian group $A$ and $\xi \subseteq \eta$.
Then there is a canonical embedding
\begin{equation}
(\eta \cap \alpha) / (\xi \cap \alpha) \rightarrow \eta / \xi
\end{equation}
\end{lemma}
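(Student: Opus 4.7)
The plan is to explicitly construct the claimed embedding as the obvious map sending an element of $\eta \cap \alpha$ to itself, regarded now as an element of $\eta$, and to verify that it is well defined, a homomorphism, and injective. All three checks reduce to an application of the hypothesis $\xi \subseteq \eta$ together with elementary set-theoretic manipulation of intersections and cosets, so no deeper machinery from earlier in the paper (such as the cleaning lemma or the bicharacter framework) is needed here.

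First I would define
\begin{equation}
\varphi : (\eta \cap \alpha)/(\xi \cap \alpha) \to \eta/\xi, \quad \varphi([x]) = [x],
\end{equation}
where on the left the coset is with respect to $\xi \cap \alpha$ and on the right with respect to $\xi$. This makes sense because $\eta \cap \alpha \subseteq \eta$, so every representative $x$ is already in $\eta$. To see that $\varphi$ is well defined, suppose $x, y \in \eta \cap \alpha$ represent the same coset in the domain, i.e.\ $x y^{-1} \in \xi \cap \alpha$. Then in particular $x y^{-1} \in \xi$, so $[x] = [y]$ in $\eta/\xi$. The homomorphism property is immediate from $\varphi([x][y]) = \varphi([xy]) = [xy] = [x][y]$, since the group operation on both quotients is inherited from the ambient Abelian group $A$.

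The only substantive step is injectivity, which is where the hypothesis that we intersected with $\alpha$ on both sides plays its role. Suppose $\varphi([x]) = \varphi([y])$, i.e.\ $x y^{-1} \in \xi$. Since $x, y \in \alpha$ and $\alpha$ is a subgroup, we also have $x y^{-1} \in \alpha$, so that $x y^{-1} \in \xi \cap \alpha$. This exactly says $[x] = [y]$ in the domain, hence $\varphi$ is injective, which is what ``canonical embedding'' means.

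I do not anticipate any real obstacle here; the statement is essentially the standard second isomorphism theorem dressed up with an extra subgroup $\xi$. The only thing to be slightly careful about is the direction of the hypothesis $\xi \subseteq \eta$, which ensures that $\xi \cap \alpha$ is in fact a subgroup of $\eta \cap \alpha$ so that the left-hand quotient is defined in the first place.
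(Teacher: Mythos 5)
Your proof is correct and follows essentially the same route as the paper's: the same canonical map $[x]\mapsto[x]$, well-definedness via $\xi\cap\alpha\subseteq\xi$, and injectivity by observing that a representative lying in both $\xi$ and $\alpha$ lies in $\xi\cap\alpha$. The only difference is that you explicitly verify the homomorphism property, which the paper leaves implicit.
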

\begin{proof}
The embedding is defined by mapping the equivalence class $[a] \in (\eta \cap \alpha) / (\xi \cap \alpha)$ to $[a] \in \eta / \xi$.
This map is well-defined:
If $[a] = [b]$ in $(\eta \cap \alpha) / (\xi \cap \alpha)$, then $a = bc$ with $c \in (\xi \cap \alpha) \subseteq \xi$, and thus $[a] = [b]$ in $\eta / \xi$.
Now we show injectivity. If $[a] = 1$ in $\eta / \xi$, then $a \in \xi$.
Since we also have $a \in \alpha$ by definition, it follows $a \in \xi \cap \alpha$, and thus $[a] = 1$ in $(\eta \cap \alpha) / (\xi \cap \alpha)$.
\end{proof}

\begin{proof}[Proof of \cref{lem:CleaningConcrete}]
Since $\gauge \subseteq \undet = \meas^\perp$, we can apply the abstract cleaning \cref{lem:CleaningAbstract} with $\eta = \meas$, $\xi = \gauge$ and $\alpha = A_R$.
We obtain
\begin{equation*}
\begin{split}
&|(\undet \cap A_R) / (\gauge \cap A_R)| \cdot |(\logops \cap (A_R)^\perp) / (\meas \cap (A_R)^\perp)| \\
&= |\logops / \meas| \, .
\end{split}
\end{equation*}
Since $R$ is correctable, the first term is $1$.
Thus,
\begin{equation}
|(\logops \cap (A_R)^\perp) / (\meas \cap (A_R)^\perp)| = |\logops / \meas| \, .
\end{equation}
By \cref{lem:QuotientEmbedding}, the group on the left-hand side is embedded in the group on the right hand side.
Thus this equation implies that they are actually equal.
Since we use the product bicharacter on $A$, $(A_R)^\perp = A_{R^c}$.
Thus any element $[l] \in \logops / \meas$ has a representative $ l \in \logops \cap (A_{R^c})$, i.e. a representative that has no support on $R$.
\end{proof}

\subsection{Cleaning up}
Using the cleaning \cref{lem:CleaningConcrete}, we can now finish the proof of  \cref{eq:Counting}, and thus of \cref{thm:Main},
using similar arguments to the stabilizer code case.
\begin{proof}[Proof of \cref{thm:Main}]
Let $a,b \in \Gamma'$, i.e. $a$ and $b$ correspond to non-trivial canonical moments. Then we have,
\begin{equation}
\begin{aligned}
&|\{l \in \logops \, : \, a \leq l \text{ and } b \leq l \}| \\
&= \sum_{l \in \logops} [a \leq l \text{ and } b \leq l] \\
&= \sum_{[l] \in (\logops / \meas)} \sum_{s \in \meas} [a \leq ls \text{ and } b \leq ls] \\
&= \sum_{[l] \in (\logops / \meas)} \sum_{s \in \meas} [a \leq s \text{ and } b \leq s] \\
&= |\logops / \meas| \cdot |\{s \in \meas \, : \, a \leq s \text{ and } b \leq s \}| \,,
\end{aligned}
\end{equation}
where we have used the following steps.
In the second equality, we split the sum into a sum over cosets $[l]$ of $\meas$, where each coset is described by a representative $l \in \logops$.
The third equality used the cleaning \cref{lem:CleaningConcrete} in the following way:
By the properties of $\Gamma'$ from \cref{eq:Gamma-prime_Apdx},
the support of $a$ and $b$ must be contained in supports $\gamma_a, \gamma_b \in \Gamma$.
Then, by the assumption that the noise is correctable (\cref{def:CorrectableNoise}), $\supp(a) \cup \supp(b)$ must be a correctable region.
Thus, by \cref{lem:CleaningConcrete}, we can always choose the representative $l$ such that it has no support on $\supp(a) \cup \supp(b)$. Then the substring relations $a \leq ls$ and $b \leq ls$ are only determined by $s$. This finishes the proof of \cref{thm:Main}.
\end{proof}

We will now discuss some specializations of this theorem for different classes of \ac{QEC} codes.
The case of stabilizer codes was already treated in detail in the main text.
We can in fact treat even more general classes of codes.

\subsection{Subsystem codes}
\label{sec:subsystem}
Subsystem codes \cite{lidar_brun_2013, poulin2005_subsystemcodes} are an important generalization of stabilizer codes.
We will explain the basic principles following \cite{vuillot2019_latticesurgeryisgaugefixing}.
A subsystem code can be viewed as a stabilizer code where some of the logical qubits are not used to encode information.
The corresponding logical operators can be measured without destroying the encoded information.
The primary advantage is that this can often lead to stabilizer measurements of lower weights.
Furthermore, some fault-tolerant schemes are naturally described in the language of subsystem codes \cite{vuillot2019_latticesurgeryisgaugefixing}.
Finally, the effect of circuit noise can also be expressed in the language of subsystem codes \cite{pryadko2020_maximumlikelihoodcircuitnoise, chubb2019_statisticalmechanicsqec, bacon2105_quantumcodesfromcircuits}.
Thus, there is some hope that the following results can also be used to treat circuit noise models instead of phenomenological noise models for stabilizer codes.

A subsystem code can be described by a gauge group $\gauge \subseteq \pauli^n$, whose elements act trivially on the encoded information.
The gauge group contains the stabilizers as well as the logical operators that only act on the unused logical qubits.
Unless the code is a standard stabilizer code, $\gauge$ is not Abelian (when viewed as a subgroup of $\pauliphases^n$).
In the effective Pauli group, this can be expressed as $\gauge \not \subseteq \gauge^\perp$.
The stabilizer group is then the center of the gauge group in $\pauliphases^n$. Expressed in the effective Pauli group:
\begin{equation}
\label{eq:stabs-subsystem}
\stabs \coloneqq \gauge^\perp \cap \mathscr G \, .
\end{equation}
As for stabilizer codes, error detection is performed by measuring the stabilizer group in each round, resulting in a set of $\pm 1$ outcomes called the syndrome.
This can be done by either measuring a set of generators of $\stabs$, or by splitting the stabilizers into products of possibly non-commuting gauge operators and measuring these gauge operators.
Since the gauge operators do not affect the encoded information, the fact that these measurements do not commute does not affect the encoded information,
and it might allow for measurements with lower weight than the stabilizer generators.

Operators that affect the encoded information without being detected are called \emph{(dressed) logical operators}.
The set of such operators is given by $\logops_d = \stabs^\perp$.
If the operators act only on the actual logical qubits, but not on the discarded gauge logical qubits, then they are called \emph{bare logical operators}. The group of bare logical operators is $\logops_b = \gauge^\perp$.
As usual, the distance of the code is defined as the smallest weight of an undetectable error that non-trivially affects the logical information, i.e.\ as the minimal weight of an element of $\logops_d \setminus \gauge$.

We consider a subsystem code subject to phenomenological noise, where before each error correction round an error $e \in \pauli^n$ occurs according to some distribution $P(e)$.
Here, we assume perfect measurements.
In each round, the syndrome of the data error that was accumulated over all previous rounds is measured, and thus the errors in different rounds are not independent.
However, if we consider the syndromes relative to the syndrome of the previous round, then we only detect the new errors.
The same effect is achieved by tracking the Pauli frame, or by applying a correction between rounds that returns the state to the code space.
Thus, in each round we can obtain the measurement outcomes $\scprod{s}{e}$ for each $s \in \stabs$.
This means that set of available measurements $\meas$ is exactly the stabilizer group $\stabs$.

To summarize, we can apply \cref{thm:Main} to subsystem codes by setting $\gauge = \gauge$, $\meas = \stabs$, $\undet = \logops_d$ and $\logops = \logops_b$.
Note that \cref{eq:stabs-subsystem} implies the inclusion $\meas \subseteq \gauge^\perp$, as required by \cref{thm:Main}.
Furthermore, the definition of distance implies that any region of size at most $d-1$ is correctable in the sense of \cref{def:CorrectableRegion}.
Thus, if the error distribution factorizes into independent channels $P_\gamma$, such that each support $\gamma$ contains no more than $\lfloor \frac{d-1}{2} \rfloor$ qubits, the first part of \cref{def:CorrectableNoise} is also fulfilled.
If furthermore $P_\gamma(I) > \frac{1}{2}$ for all $\gamma$, then $E_\gamma > 0$ for all $\gamma$ and thus $E = \prod_\gamma E_\gamma > 0$.
Then, the channel $P$ is correctable in the sense of \cref{def:CorrectableNoise}.
We obtain the following corollary.

\begin{corollary}
Phenomenological data noise with error rates smaller than $\frac{1}{2}$ can be estimated up to logical equivalence from the measurements of a subsystem code if the noise is not correlated over more than half the distance of the code.
\end{corollary}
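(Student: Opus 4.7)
The plan is to recognize the corollary as a direct specialization of Theorem~\ref{thm:Main} and check that both conditions of \cref{def:CorrectableNoise} are inherited from the physical assumptions. First I would set up the abstract-to-concrete dictionary exactly as in the paragraph preceding the corollary: take the abstract gauge group to be the subsystem-code gauge group $\gauge$, the measurement group to be $\meas = \stabs = \gauge^\perp \cap \gauge$, and consequently $\undet = \meas^\perp = \logops_d$ and $\logops = \gauge^\perp = \logops_b$. The inclusion $\meas \subseteq \gauge^\perp$ demanded by \cref{thm:Main} is immediate from the definition of $\stabs$.

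Next I would verify the first condition of \cref{def:CorrectableNoise}. Since the code distance $d$ is defined as the minimum weight of an element of $\logops_d \setminus \gauge$, any region $R \subseteq [n]$ with $|R| \leq d-1$ is correctable in the sense of \cref{def:CorrectableRegion}: every $e \in \undet$ supported on $R$ has weight strictly less than $d$ and therefore cannot lie in $\logops_d \setminus \gauge$, so it must belong to $\gauge$. Under the hypothesis that correlations do not exceed half the distance, each support satisfies $|\gamma| \leq \lfloor (d-1)/2 \rfloor$, and hence $|\gamma_1 \cup \gamma_2| \leq d-1$ for every pair $\gamma_1, \gamma_2 \in \Gamma$; this makes $\gamma_1 \cup \gamma_2$ correctable, as required. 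For the second condition I would check positivity of all moments: if each local channel satisfies $P_\gamma(I) > 1/2$, then for every $a \in A_\gamma$ the estimate
\begin{equation}
E_\gamma(a) \;=\; P_\gamma(I) + \sum_{e \neq I} \scprod{a}{e}\, P_\gamma(e) \;\geq\; 2 P_\gamma(I) - 1 \;>\; 0
\end{equation}
holds, using $|\scprod{a}{e}|=1$ and $\sum_{e\neq I} P_\gamma(e) = 1 - P_\gamma(I)$. Since $E(a) = \prod_\gamma E_\gamma(a)$ by \eqref{eq:Moments-Decomposition} with each factor periodic in $a$, positivity of the factors forces $E(a) > 0$ globally. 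Both conditions of \cref{def:CorrectableNoise} are thus established, and \cref{thm:Main} delivers the conclusion.

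The hard part of the overall result has already been absorbed into \cref{thm:Main} and the cleaning lemma behind it, so the main obstacle in this corollary is almost entirely bookkeeping: one has to be careful that the abstract annihilator relations transport correctly (in particular that the subsystem-code identity $\stabs = \gauge^\perp \cap \gauge$ really yields $\meas^\perp = \logops_d$ and $\meas \subseteq \gauge^\perp$), and that the distance bound is applied to pairwise unions of supports rather than single supports, which is why the factor $1/2$ in ``half the distance'' appears.
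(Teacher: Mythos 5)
Your proposal is correct and follows essentially the same route as the paper: the same dictionary $\meas=\stabs=\gauge^\perp\cap\gauge$, $\undet=\logops_d$, $\logops=\logops_b$, the same use of the distance to make pairwise unions of supports correctable, and the same reduction of moment positivity to $P_\gamma(I)>\tfrac12$ before invoking \cref{thm:Main}. The only difference is that you spell out the elementary bound $E_\gamma(a)\ge 2P_\gamma(I)-1>0$, which the paper merely asserts.
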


\subsection{Quantum data-syndrome codes}
\label{sec:data-syndrome}
So far, we have only treated data errors and assumed perfect measurements.
Now, we will consider measurement errors in a phenomenological noise model.
A simple framework for this is provided by quantum data-syndrome codes \cite{ashikhmin_quantumdatasyndromecodes, fujiwara_datasyndromecodes}, which allow for a unified treatment of data and measurement errors.
It should however be noted that, while quantum data-syndrome codes capture a large class of fault-tolerant measurement schemes, some adaptive schemes such as flag fault-tolerance  \cite{chao2018_flagfaulttolerance} are not easily described in this language.
Since this section is only concerned with phenomenological noise models, we do not take into account errors that happen during the execution of the measurement circuits and error propagation in these circuits.

To define a quantum data-syndrome code, we first pick an underlying stabilizer code with stabilizer group $\stabs \subseteq \pauli^n$.
In each round, instead of just a set of generators, a larger set of  redundant stabilizers $g_1,\cdots,g_m \in \stabs$ is measured.
The simplest and most common case is to simply repeat the measurements of the generators.
More generally, the redundant stabilizers can be chosen according to a classical code, as described in \cite{ashikhmin_quantumdatasyndromecodes}.
An error can then be described by a data error $e_d \in \pauli^n$ and a measurement error $e_m \in \Ftwo^m$, i.e.\ $e_m[i] = 1$ if the measurement of $g_i$ returned the wrong outcome and $e_m[i] = 0$ otherwise.
The measurements of the generators, including measurement errors, can be described by the extended parity check matrix
\begin{equation}
H = \begin{bmatrix}
G & I_m
\end{bmatrix} \, ,
\end{equation}
where the rows of $G$ are the original stabilizers.
That is, each generator $g_i \in \pauli$ is extended to an element $f_i = (g_i, \hat i) \in \paulids$, where $\hat i$ is the $i$-th standard basis vector.
Then, the outcome of the measurement of $f_i$ if an error $e = (e_d,e_m) \in \paulids = \pauli^n \times \Ftwo^m$ occurred is exactly given by $\scprod{f_i}{e}$, using the bicharacter of $\paulids \coloneqq \pauli^n \times \Ftwo^m$ (\cref{eq:BicharacterPauliDS}).
The set of measurements we have access to is thus the group generated by the extended stabilizers $f_i$,
\begin{equation}
\meas \coloneqq \langle f_1,\dots,f_m \rangle \, .
\end{equation}
As always, the collection of measurement outcomes for all $s \in \meas$ is called the \emph{syndrome}, and it can be obtained by measuring the generators $f_i$ of $\meas$.
The undetectable errors $\undet = \meas^\perp$ are exactly those that result in a trivial syndrome.

Similarly to subsystem codes, the concepts of measurements and stabilizers do not coincide.
In fact, the measurements do not corresponds to undetectable errors, $\meas \not \subseteq \undet$.
Since errors differing by elements of $\meas$ do not have the same syndrome
it follows that, in particular, they cannot be considered logically equivalent.
Instead, logical equivalence is still described by the stabilizer group $\stabs \subseteq \pauli^n$ of the underlying code, which we view as a subgroup of $\paulids$.
The \emph{logical operators} $\logops$ are those operators that map the codespace of the underlying quantum code to itself, i.e. $\logops \coloneqq \stabs^\perp$, where the annihilator is in $\paulids$, not just in $\pauli^n$.
These groups then fulfill the dual inclusion relations $\stabs \subseteq \undet$ and $\meas \subseteq \logops$.

Motivated by the discussion above, the \emph{distance} of a data-syndrome code is defined as the minimal weight of an element of $\undet \setminus \stabs$ \cite{ashikhmin_quantumdatasyndromecodes}.
Remember that a region $R \subseteq [n+m]$ is correctable if there is no element of $\undet \setminus \stabs$ that is supported on $R$ (\cref{def:CorrectableRegion}).
Thus, as expected, if $|R| < d$, $R$ is correctable.

Consider a phenomenological noise model where in each round a new error $e = (e_d,e_m)$ occurs according to a distribution $P$.
If we were to always reset to the ground state of our code between two rounds of measurements, we could now directly apply \cref{thm:Main}, setting $\gauge = \stabs$.
However, in a more realistic setting we want to preserve the information between rounds and thus our measurements will act on the accumulated data error in each round and not just on the new error.
Similar to the previous section, we can remedy this by considering the syndrome relative to the previous one.
However, this will effectively propagate measurement errors between rounds.
If $a_d$ is the accumulated data error in a given round, $a_m$ the measurement errors in that round, and $e = (e_d,e_m)$ is the new error occurring in the next round, then the product of outcomes for the measurement $s \in \meas$ is given by
\begin{equation}
\scprod{s}{(a_d,a_m)}\scprod{s}{(a_d,0)(e_d,e_m)} = \scprod{s}{(e_d,e_ma_m)} \, .
\end{equation}
Thus, effectively we measure the new data error $e_d$ and the combined measurement error $a_me_m$ from both rounds.
Since the measurement errors are assumed to be independent between rounds, the distribution $\tilde P$ of $(e_d,a_me_m)$ factorizes in the same way as $P$, but the strength of measurement errors is increased.
Here, it is important that we divide the measurements into disjoint pairs of consecutive rounds such that the measurement errors are also independent between each pair.
By \cref{thm:Main}, as long as the original noise is correctable, we can then estimate the adjusted distribution from the syndrome measurements, up to logical equivalence.

Depending on how exactly error correction is performed, $\tilde P$ might be the most relevant distribution.
If we are instead interested in the original error distribution $P$, we can also obtain this by post-processing as follows.
Denote as $P_m$ the marginal distribution of $P$ on the measurement errors, i.e.
\begin{equation}
P_m(e_m) = \sum_{e_d \in \pauli^n} P(e_d,e_m) \, .
\end{equation}
We can write this as $P_m = (P \conv \Phi_{\pauli^n}) \cdot \Phi_{\Ftwo^m}$, where we view $\pauli^n$ and $\Ftwo^m$ as subgroups of $\paulids$.
Since $(\pauli^n)^\perp = \Ftwo^m$, we have by \cref{lem:IndicatorDuality},
\begin{equation}
\begin{split} 
E_m &\coloneqq \fourier{P_m} = \fourier{P \conv \Phi_{\pauli^n}} \conv \fourier{\Phi_{\Ftwo^m}} \nonumber \\
& = \bigl(E \cdot \abs{\paulids}U_{\Ftwo^m}\bigr) \conv \abs{\paulids}U_{\pauli^n} \nonumber \\
& = \frac{\abs{\paulids}^2}{\abs{\pauli^n}\abs{\Ftwo^m}} (E \cdot \Phi_{\Ftwo^m})\conv \Phi_{\pauli^n}
= \abs{\paulids} (E \cdot \Phi_{\Ftwo^m})\conv \Phi_{\pauli^n} \nonumber \, .
\end{split}
\end{equation}
Explicitly, this means that $E_m(e_d,e_m) = \abs{\paulids} \,E(0,e_m)$.
Since the measurement errors are independent between rounds, the adjusted distribution is given by $\tilde P = P \conv P_m$.
Thus, the moments of the adjusted distribution are $\tilde E \coloneqq \fourier{\tilde P} = E \cdot E_m$.
We can obtain these up to logical equivalence, i.e.\ we obtain $\tilde E \cdot \Phi_{\logops}$, and are interested in the moments $E \cdot \Phi_{\logops}$ of the original logical channel.
Since $\gauge = \stabs \subseteq \pauli^n$, we have $\Ftwo^m \subseteq \logops = \gauge^\perp$.
Thus, in particular we have access to the moment $\tilde E(0,e_m)$ for any $e_m \in \Ftwo^m$, and by the  above discussion we have $\tilde E(0,e_m) = E(0,e_m)E_m(0,e_m) = \frac{1}{\abs{\paulids}}E_m(0,e_m)^2$.
Thus we can obtain the original moment for each $l = (l_d,l_m) \in \logops$ from the adjusted moments as follows:
\begin{equation}
E(l_d,l_m) = \frac{\tilde{E}(l_d,l_m)}{E_m(l_d,l_m)} = \frac{1}{\sqrt{|\paulids|}}\frac{\tilde E(l_d,l_m)}{\sqrt{\tilde E(0,l_m)}} \, .
\end{equation}

All in all we obtain the following corollary to \cref{thm:Main}.

\begin{corollary}
Phenomenological data and measurement noise with error rates smaller than $\frac{1}{2}$ can be estimated from the measurements of a quantum data-syndrome code up to logical equivalence if the noise is not correlated over more than half the distance of the code.
\end{corollary}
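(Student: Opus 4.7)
The plan is to apply \cref{thm:Main} to the data-syndrome specialization described in \cref{sec:data-syndrome}, and then invert one convolution to recover the physical quantities of interest. First I would fix the four groups the theorem requires: take $A = \paulids$ with the product bicharacter \eqref{eq:BicharacterPauliDS}, set the gauge group to $\gauge \coloneqq \stabs$ (embedded in $\paulids$ via $\pauli^n \hookrightarrow \paulids$), and take the measurement group $\meas \coloneqq \langle f_1,\dots,f_m\rangle$. The hypothesis $\gauge \subseteq \meas^\perp$ is immediate, since each $f_i = (g_i,\hat\imath)$ has trivial scalar commutator with every stabilizer.

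Next I would verify that the noise is correctable in the sense of \cref{def:CorrectableNoise}. By the correlation-length hypothesis each support $\gamma \in \Gamma$ contains at most $\lfloor(d-1)/2\rfloor$ positions, so $\abs{\gamma_1 \cup \gamma_2} \leq d-1$ for any pair $\gamma_1,\gamma_2 \in \Gamma$; the definition of the data-syndrome distance then makes every such union a correctable region in the sense of \cref{def:CorrectableRegion}. Local error rates below $\tfrac{1}{2}$ force each $E_\gamma$ to be strictly positive, and \eqref{eq:Moments-Decomposition} lifts this to positivity of every global moment $E$.

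The step that requires actual work is that we do not obtain i.i.d.\ samples from $P$: data errors accumulate between rounds, so a single raw syndrome does not isolate one sample of the channel. To fix this I would partition the measurement rounds into disjoint consecutive pairs and take, as observables, the products of the two syndromes within each pair, as indicated in \cref{sec:data-syndrome}. A short calculation shows that these observables sample from the convolved distribution $\tilde P \coloneqq P \conv P_m$, where $P_m$ is the marginal of $P$ on $\Ftwo^m$. Because $P_m$ lives purely in the measurement coordinates, $\tilde P$ factorizes over the same $\Gamma$ as $P$, the correctable-region condition is unchanged, and positivity $\tilde E = E \cdot E_m > 0$ is preserved; hence \cref{thm:Main} applied to $\tilde P$ reconstructs $\tilde E \cdot \Phi_{\logops}$ from the syndrome statistics.

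Finally, I would invert the convolution to recover $E \cdot \Phi_{\logops}$. Since $\gauge = \stabs \subseteq \pauli^n$, we have $\Ftwo^m \subseteq \logops$, so every moment $\tilde E(0, l_m)$ for $l_m \in \Ftwo^m$ is among the ones just obtained. Combining \cref{lem:IndicatorDuality} with \cref{lem:ImpulsivePeriodicDuality} gives $E_m(l_d, l_m) = \abs{\paulids}\, E(0, l_m)$ and consequently $\tilde E(0, l_m) = \abs{\paulids}^{-1} E_m(0, l_m)^2$, which allows the pointwise solution $E(l_d, l_m) = \abs{\paulids}^{-1/2}\, \tilde E(l_d, l_m) / \sqrt{\tilde E(0, l_m)}$ for every $l = (l_d, l_m) \in \logops$. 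The main obstacle is precisely this inter-round coupling: the core theorem wants a one-shot i.i.d.\ channel, and one must verify that the pair-syndrome construction really does yield such a channel whose support structure, positivity, and correctable-region condition are all inherited from the original noise, so that the hypotheses of \cref{thm:Main} are not silently violated by the temporal accumulation of measurement errors.
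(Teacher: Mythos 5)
Your proposal is correct and follows essentially the same route as the paper: the same identification of $A=\paulids$, $\gauge=\stabs$, $\meas=\langle f_1,\dots,f_m\rangle$, the same pairing of consecutive rounds to reduce accumulated errors to the convolved channel $\tilde P = P \conv P_m$, and the same inversion $E(l_d,l_m)=\abs{\paulids}^{-1/2}\,\tilde E(l_d,l_m)/\sqrt{\tilde E(0,l_m)}$ using $\Ftwo^m\subseteq\logops$. No gaps to report.
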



\begin{acronym}[LDPC]
\acro{ACES}{averaged circuit eigenvalue sampling}
\acro{AGF}{average gate fidelity}
\acro{AP}{Arbeitspaket}

\acro{BOG}{binned outcome generation}

\acro{CP}{completely positive}
\acro{CPT}{completely positive and trace preserving}
\acro{cs}{computer science}
\acro{CS}{compressed sensing} 

\acro{DAQC}{digital-analog quantum computing}
\acro{DD}{dynamical decoupling}
\acro{DFE}{direct fidelity estimation} 
\acro{DFT}{discrete Fourier transform}
\acro{DM}{dark matter}

\acro{FFT}{fast Fourier transform}

\acro{GST}{gate set tomography}
\acro{GTM}{gate-independent, time-stationary, Markovian}
\acro{GUE}{Gaussian unitary ensemble}

\acro{HOG}{heavy outcome generation}

\acro{irrep}{irreducible representation}

\acro{LDPC}{low density partity check}
\acro{LP}{linear program}

\acro{MAGIC}{magnetic gradient induced coupling}
\acro{MBL}{many-body localization}
\acro{MIP}{mixed integer program}
\acro{ML}{machine learning}
\acro{MLE}{maximum likelihood estimation}
\acro{MPO}{matrix product operator}
\acro{MPS}{matrix product state}
\acro{MS}{M{\o}lmer-S{\o}rensen}
\acro{MUBs}{mutually unbiased bases} 
\acro{mw}{micro wave}

\acro{NISQ}{noisy and intermediate scale quantum}

\acro{ONB}{orthonormal basis}
\acroplural{ONB}[ONBs]{orthonormal bases}

\acro{POVM}{positive operator valued measure}
\acro{PSD}{positive-semidefinite}
\acro{PSR}{parameter shift rule}
\acro{PVM}{projector-valued measure}

\acro{QAOA}{quantum approximate optimization algorithm}
\acro{QC}{quantum computation}
\acro{QEC}{quantum error correction}
\acro{QFT}{quantum Fourier transform}
\acro{QM}{quantum mechanics}
\acro{QML}{quantum machine learning}
\acro{QMT}{measurement tomography}
\acro{QPT}{quantum process tomography}
\acro{QPU}{quantum processing unit}
\acro{QUBO}{quadratic binary optimization}

\acro{RB}{randomized benchmarking}
\acro{RBM}{restricted Boltzmann machine}
\acro{RDM}{reduced density matrix}
\acro{rf}{radio frequency}
\acro{RIC}{restricted isometry constant}
\acro{RIP}{restricted isometry property}

\acro{SDP}{semidefinite program}
\acro{SFE}{shadow fidelity estimation}
\acro{SIC}{symmetric, informationally complete}
\acro{SPAM}{state preparation and measurement}

\acro{TT}{tensor train}
\acro{TM}{Turing machine}
\acro{TV}{total variation}

\acro{VQA}{variational quantum algorithm}

\acro{VQE}{variational quantum eigensolver}

\acro{XEB}{cross-entropy benchmarking}

\end{acronym}

\bibliographystyle{./myapsrev4-2}
\providecommand{\njp}{New J.\ Phys.}
\bibliography{bibliography,mk}

\end{document}